\documentclass[11pt,cls,onecolumn]{IEEEtran}

\usepackage{stfloats}
\usepackage{color}
\usepackage{amssymb} 
\usepackage{mathtools}
\usepackage{footnote}
\usepackage{amsfonts}
\usepackage[colorlinks=true, linkcolor=blue]{hyperref}
\usepackage{booktabs}
\usepackage{makecell}
\usepackage{graphicx}
\usepackage{subcaption}
\usepackage{stmaryrd}  

\newcommand{\ve}[1]{{\bf #1}}

\newcommand{\mc}{\mathcal}
\newcommand{\wcf}{\mathcal F_m^{\text{W}}}
\newcommand{\bfm}{\mathbb {BF}_m}
\newcommand{\intset}[1]{[\![#1]\!]}
\newcommand{\lt}{\mathsf{T}}
\newcommand{\lf}{\mathsf{F}}
\newcommand{\cbfc}{C_{\text{BFC}}}

\newcommand{\rf}{rate function }

\newtheorem{lemma}{Lemma}
\newtheorem{theorem}{Theorem}
\newtheorem{proposition}{Proposition}
\newtheorem{definition}{Definition}

\newtheorem{remark}{Remark}

\newtheorem{example}{Example}

\begin{document}

\sloppy
\IEEEoverridecommandlockouts
\title{Capacity regimes for Boolean function computation via channels}

\author{ Jingge Zhu, Matthias Frey\\


%
\thanks{Parts of this work have been presented at ISIT 2026 \cite{BFC_ISIT, BFC_ISIT_arXiv}. J. Zhu and M. Frey are  with Department of Electronic and Electrical Engineering, The University of Melbourne,  Parkville, Victoria, Australia. (e-mail: \{jingge.zhu, matthias.frey\}@unimelb.edu.au)} 
%
%
}




\maketitle

\begin{abstract}
Consider a point-to-point communication system in which the transmitter holds a binary message of length $m$ and transmits a corresponding codeword of length $n$.  The receiver's goal is to recover a Boolean function of that message, where the function is unknown to the transmitter, but chosen from a known class $\mc F$. We are interested in the asymptotic relationship of $m$ and $n$:  given $n$, how large can $m$ be (asymptotically), such that the value of the Boolean function can be recovered reliably? This problem generalizes the identification-via-channels framework introduced by Ahlswede and Dueck. In this paper, we formulate the notion of computation capacity, and derive achievability and converse results for a large class of functions $\mc F$, characterized by the Hamming weight of functions. Different from the classical transmission problem, the performance of the function computation problem is jointly characterized by the computation capacity and the rate function, namely how $m$ scales with $n$ asymptotically. Our results  give a complete characterization of the \rf of the computation problem, and provide  upper and lower bounds on the computation capacity, where they differ by a factor of at most $2$.
\end{abstract}

\section{Introduction}

We consider the problem of computing Boolean functions over noisy channels. To motivate the problem, consider the following simple example of a Battery Management System in an electric vehicle,  where a sensing unit reports a binary status vector  $\ve b=(\ve b_1,\ve b_2,\ve b_3)\in\{0,1\}^3$,
with $\ve b_1$, $\ve b_2$, and $\ve b_3$ indicating over--voltage, under--voltage, and
over--temperature conditions of the battery, respectively.  Depending on the operational objective, the controller does not require the full vector $\ve b$, but instead only a Boolean function of $\ve b$, transmitted through a possibly noisy channel.  For example, in \textit{driving mode} the controller needs to evaluate 
\begin{align*}
    f_{\mathrm{drive}}(\ve b):=\ve b_1\lor \ve b_3,
\end{align*}
where $0$ means ``false'', $1$ means ``true'', and $\lor$ is the logical OR operator. The function  $f_{\mathrm{drive}}$ equals $1$ (signals an alarm) whenever either an over--voltage ($\ve b_1=1$) or an over--temperature ($\ve b_3=1$) event occurs. 
In \textit{charging mode}, the controller must determine whether charging is not permissible, 
represented by
\begin{align*}
    f_{\mathrm{charge}}(\ve b): =\ve b_1\lor (\ve b_2\land \ve b_3),
\end{align*}
where $\land$ is the logical AND operator. The function $f_{\mathrm{charge}}$ equals $1$ (signals an alarm)  when either over--voltage $(\ve b_1=1)$, or the abnormal 
``under--voltage and hot'' ($\ve b_2=1$ and $\ve b_3=1$) condition occurs.

The Boolean Function Computation (BFC) problem is an abstraction of this example, where the receiver aims to recover one of several distinct Boolean functions of the same underlying binary sequence. To be precise, consider a point-to-point communication system where the transmitter chooses a message  represented by binary sequences of length $m$, and the decoder chooses a Boolean function $f: \{0,1\}^m\rightarrow \{0,1\}$ from a family $\mc F$. Here we assume that the set $\mc F$ is known to both the encoder and the decoder, but the function $f$ (chosen by the decoder) is unknown to the encoder. Let $i\in\{0,1\}^m$ denote the message at the transmitter, which is encoded to a codeword  $X^n $ of length $n$ and is transmitted through a noisy channel. With the channel output $Y^n$, the decoder is required to reliably decide if $f(i)=0$ or $f(i)=1$.

The BFC problem is a generalization of the \textit{identification via channel} problem studied in a seminal paper by Ahlswede and Dueck~\cite{ahlswede_identification_1989}.  The goal of identification is for the decoder to  decide if the transmitted message is  \textit{a particular one}. It is easy to see that the identification problem is equivalent to an instance of BFC problem with a suitable choice of function class $\mc F$ (details in Section \ref{sec:example_BFC}). A major result from \cite{ahlswede_identification_1989} is that for the identification problem, the length of the message $m$ can scale \textit{exponentially} with $n$, namely $m\sim 2^{n}$, as compared to the classical Shannon transmission problem where $m$ can only scale linearly with $n$, $m\sim n$. After the appearance of~\cite{ahlswede_identification_1989} (and \cite{identification_feedback_1989} where feedback is considered),  many works have extended the identification problem to other set-ups. For point-to-point channels, a line of more recent work including \cite{salariseddigh_deterministic_2022} \cite{colomer_deterministic_2025}, investigates the identification problem when the encoders are restricted to a deterministic function of the message, for which the original work~\cite{ahlswede_identification_1989} only gives a very brief treatment. Explicit code constructions for identification are investigated in   \cite{verdu_explicit_1993} \cite{ahlswede_identification_1991} and further improved in  \cite{kurosawa_strongly_1999} and  \cite{gunlu_code_2022}. Extending the results to multi-user scenarios is considered in \cite{ahlswede_general_2008} for multiple-access channels and \cite{bracher_identification_2017} for broadcast channels. The readers are referred to the recent survey \cite{von_lengerke_codes_2025} for a more comprehensive summary of known results. 

Particularly relevant to our work is \cite{ahlswede_general_2008}, where a general information transfer problem is formulated. While a specialization of this model is equivalent to the Boolean function computation problem, only partial results are obtained in \cite{ahlswede_general_2008}. We make the connection and comparison explicit in Section \ref{sec:Ahlswede_models}.

The main contribution of this paper is to characterize the possible scaling law of $m$ and $n$ for the BFC problem, and provide tight upper and lower bounds on the computation capacity. We show that the scaling  behaviour heavily depends on the property of the function class $\mc F$. Specifically, we consider $\mc F$ consisting of functions $f$ that have a \textit{Hamming weight} equal to, or,  smaller than or equal to a certain threshold, where the Hamming weight is defined to be the cardinality of the pre-image of $1$ under $f$, an integer between $0$ and $2^m$. We denote the hamming weight by $S(m)$, as it is generally a function of $m$.  Our results naturally divide $S(m)$ into three regimes. We call $S(m)$ \textit{small} if $\log S(m)=O(\log m)$, \textit{medium} if $\log S(m)=\omega(\log m)$ but $\log S(m)\leq \alpha m$ for some $\alpha\in(0,1)$, \textit{large} if $\log S(m)\rightarrow m$. Our results show that $m$ scales exponentially with $n$ if $S(m)$ is small (as in the identification problem, which corresponds to $S=1$), while it scales linear with $n$ when $S(m)$ is large (the same behavior as in Shannon's transmission problem). In the medium regime, $m$ can exhibit various scaling behaviours with respect to $n$, including linear, quasi-linear, polynoimal, or sub-exponential, as shown by  examples in  Table \ref{table:summary}. The exact form the the \textit{rate function} is characterized in Theorem~\ref{thm:capacity_bfc}, which also provide upper and lower  bounds on the computation capacity, where the upper and lower bounds differ by a constant smaller than or equal to $2$.  Furthermore, the computation capacity is shown to be exactly $C$ in some special cases, where $C$ denotes the Shannon capacity of the channel.

Given a positive integer $M$, we use $\intset{M}$ to denote the set of integers $\{1,\ldots, M\}$. For a given set $\mc X$, we use $|\mc X|$ to denote the cardinality of the set.

\section{Problem formulation, auxiliary results, and examples}
We consider a channel with the input alphabet $\mc X$ and the output alphabet $\mc Y$.  A channel $W(\cdot|x^n)$ is viewed as a conditional distribution (transition kernel) over $\mathcal Y^n$, given the input codeword $x^n\in\mathcal X^n$. We use $\bfm$ to denote the set of Boolean functions of $m$ inputs, $\bfm :=\{f: \{0,1\}^m\rightarrow \{0,1\}\}$. Consider the communication problem where a \textit{binary} message, represented by a binary sequence of length $m$, is chosen and transmitted after encoding by the transmitter. A Boolean function $f:\{0,1\}^m\rightarrow \{0,1\}$, unknown to the transmitter, is chosen by the receiver, with the goal to reliably recover the function value of the transmitted message. The notion of Boolean function computation (BFC) code is formally defined as follows.

\begin{definition}[Boolean function computation (BFC) code]
Let  $m, n$ be positive integers. An $(n,m,\mc F, \lambda_1, \lambda_2)$ Boolean function computation (BFC) code consists of encoders $(Q_i)_{i \in \{0,1\}^m}$ and decoding sets $(D_j)_{j \in \intset{|\mc F|}}$, such that
\begin{itemize}
\item $\mathcal F\subseteq\bfm$ is a set of Boolean functions with $m$ inputs
\item $Q_i$ is a probability distribution on $\mathcal X^n$ for all $i\in\{0,1\}^m$
\item $D_j\subset \mathcal Y^n$ for all $j\in \intset{|\mc F|}$
\item  false negative error: $\ve 1_{f_j(i)=1}Q_iW(D_j^c)\leq \lambda_1$ for all $i, j$
\item  false positive error: $\ve 1_{f_j(i)=0}Q_iW(D_j)\leq \lambda_2$ for all $i, j$
\end{itemize}
where we define  $QW(D):=\int W(D|x)Q(dx)$.
\end{definition}

Let $f^{-1}[1]$ denote the preimage of $1$ under $f$. The cardinality of the preimage $f^{-1}[1]$ is called the Hamming weight of the Boolean function $f$. We consider the class of  \textit{constant weight Boolean functions} defined as
\begin{align*}
\wcf(S) := \{f\in \bfm: |f^{-1}[1]|= S\}
\end{align*}
where  $S$ can take integer values between $0$ and $2^m$, and 
$|\wcf(S)|= {2^m\choose S}$.
Similarly, we define the set of functions whose Hamming weight is \textit{smaller or equal to} $S$ as
\begin{align*}
\wcf(\leq S):= \{f\in \bfm: |f^{-1}[1]|\leq  S\}
\end{align*}

As we will see in the sequel, depending on the relationship between  $S$  and $m$, the maximal message length that can be supported exhibits different scaling behaviour with respect to $n$. To this end, we define the achievable computation rate as follows.

 \begin{definition}[Achievable computation rate]\label{def:rate}
Let $m,n$ be  positive integers,  $L:\mathbb R\times \mathbb N\rightarrow \mathbb R^+$ be the \textit{rate function}, and consider a sequence of sets of Boolean functions $\{\mc F_m\}_m:=\{\mc F_m, m=1,2,\ldots\}$. We say $R$ is \textit{an achievable computation rate} for the sequence $\{\mc F_m\}_m $  with the \rf $L$, if for all $\lambda_1,\lambda_2>0$, all  $\eta>0$, there exists $n_0$ such that for all $n\geq n_0$, there exists an integer $m$ and an  $(n,m, \mc F_m,  \lambda_1,\lambda_2)$ BFC code satisfying
\begin{align*}
m \geq  L(R-\eta, n).
\end{align*}
\end{definition}

To relate this to more familiar notions, the choice $L(R, n) = Rn$ corresponds to the case when $m$ scales linearly with $n$ (as in the classical transmission problem). The choice $L(R,n)=2^{Rn}$ corresponds to the case when $m$ scales exponentially with $n$ (as in the identification problem). We will see other rate functions in our main results in the next section.

\begin{definition}[Computation capacity]
For a given \rf  $L$ and a sequence of sets of Boolean functions $\{\mc F_m\}_m$, the supremum of the achievable computation rate for  the sequence $\{\mc F_m\}_m$ is called the (Boolean function) computation capacity of the channel for $\{\mc F_m\}_m$ with the \rf $L$.
\end{definition}

\subsection{Symmetry of $\wcf(S)$}

In this paper, we will study the computation capacity for both classes $\wcf(S)$ and $\wcf(\leq S)$. We first make the observation that  the capacity behaviour for $\wcf(S)$ and $\wcf(\leq S)$ is different for large $S$. The reason is that $|\wcf(\leq S)|$ is an increasing function of $S$, whereas $|\wcf(S)|$ is symmetric around the mid-point $S=\frac{1}{2}2^m$.  In particular, the set $\wcf(S)$, whose elements satisfy $|f^{-1}[1]|=S$ and $|f^{-1}[0]|=2^m-S$, essentially has the same property as the set $\wcf(2^m-S)$ by flipping the output of the functions in the former set. We formalize the observation in the following Lemma.
\begin{lemma}[Symmetry of $\wcf(S)$]
Given an $(n,m, \wcf(S), \lambda_1,\lambda_2)$ BFC code, we  can construct an $(n,m, \wcf(2^m-S), \lambda_2,\lambda_1)$ BFC code.
\label{lemma:preimage1}
\end{lemma}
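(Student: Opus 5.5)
The plan is to keep the encoders unchanged and complement the decoding sets. Let $(Q_i)_{i\in\{0,1\}^m}$ and $(D_j)_{j\in\intset{|\wcf(S)|}}$ be the given $(n,m,\wcf(S),\lambda_1,\lambda_2)$ code. The map $f\mapsto \bar f := 1-f$ is a bijection from $\wcf(S)$ onto $\wcf(2^m-S)$, because $\bar f^{-1}[1]=f^{-1}[0]$ has cardinality $2^m-S$; its inverse is the same complementation. In particular $|\wcf(S)|=|\wcf(2^m-S)|={2^m\choose S}$, so the two index sets $\intset{|\mc F|}$ coincide. We index $\wcf(2^m-S)$ by $g_j:=\bar f_j$. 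The new code uses the same encoders $Q_i$ and the decoding sets $D'_j:=\mathcal Y^n\setminus D_j=D_j^c$.

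The error conditions then follow directly. For the false negative error of the new code, fix $i,j$ with $g_j(i)=1$, i.e.\ $f_j(i)=0$. Then
\begin{align*}
Q_iW((D'_j)^c)=Q_iW(D_j)\leq \lambda_2
\end{align*}
by the false positive guarantee of the original code. For the false positive error, fix $i,j$ with $g_j(i)=0$, i.e.\ $f_j(i)=1$. Then
\begin{align*}
Q_iW(D'_j)=Q_iW(D_j^c)\leq \lambda_1
\end{align*}
by the original false negative guarantee. Hence the new code is an $(n,m,\wcf(2^m-S),\lambda_2,\lambda_1)$ BFC code.

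The only point needing care is bookkeeping. The definition requires the decoding sets to be indexed by an enumeration of the new class. We must therefore check that complementation is a genuine bijection between the two classes, so that every $g\in\wcf(2^m-S)$ receives exactly one decoding set. This holds since complementation is an involution.

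We also assume the decoding sets are measurable subsets of $\mathcal Y^n$, so that their complements are valid decoding sets. The notation $D_j\subset\mathcal Y^n$ should be read as allowing any subset, which makes this automatic.
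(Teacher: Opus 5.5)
Your proposal is correct and is essentially the paper's proof: keep the encoders $Q_i$, replace each $D_j$ by $D_j^c$, and index the new class by $1-f_j$. The two error constraints then swap roles, which gives the $(n,m,\wcf(2^m-S),\lambda_2,\lambda_1)$ code. Your bookkeeping remarks (complementation is a bijection between the two classes, and complements of decoding sets are admissible decoding sets) are correct details that the paper leaves implicit.
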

\begin{proof}
Given the encoder $\{Q_i\}$ and decoder $\{D_j\}$ of the $(n,m, \wcf(S), \lambda_1,\lambda_2)$ BFC code, and for each function $f_j\in\wcf(S)$, we define a new set of functions $\tilde f_j=1-f_j$,  and the corresponding encoders and decoders as
\begin{align*}
\tilde Q_i&:=Q_i\\
\tilde D_j&:=D_j^c
\end{align*}
Clearly, the set $\{\tilde f_j\}_j=\wcf(2^m-S)$, and it  holds $\tilde f_j^{-1}[1]= f_j^{-1}[0]$, $\tilde f_j^{-1}[0]= f_j^{-1}[1]$. To characterize the error probability of the new code, we have
\begin{align*}
\ve 1_{\tilde f_j(i)=1}\tilde Q_iW(\tilde D_j^c)=\ve 1_{f_j(i)=0}Q_iW(D_j)\leq \lambda_2\\
\ve 1_{\tilde f_j(i)=0}\tilde Q_iW(\tilde D_j)=\ve 1_{f_j(i)=1}Q_iW(D_j^c)\leq \lambda_1
\end{align*}
where the inequalities follow from the definition of the $(n,m, \wcf(S), \lambda_1,\lambda_2)$ BFC code.
\end{proof}

\begin{remark}
This lemma also implies that when we consider  the computation capacity for the class $\wcf(S)$, we can without loss of generality restrict $S$ to the range $[0,2^{m-1}]$, as any $S>2^{m-1}$ will have the same capacity (and rate function) as $2^m-S$, which is smaller than or equal to $2^{m-1}$.
\label{remark:symmetry}
\end{remark}


\subsection{Equivalent rate functions}

Since the achievable rate is an asymptotic notion, the same computation rate can be described by different rate functions. For example, the rate functions $L(R,n)=Rn$ and $L'(R,n)=Rn+a\geq 0$ for some constant $a$  is asymptotically equivalent, in the sense that if $R$ is achievable with $L$, it is also achievable with  $L'$. On the other hand, we may prefer  $L$ over $L'$ for simplicity. In light of this observation, we formulate the notion of equivalent rate functions in this section.   We first define the notion of valid rate functions, a very mild condition which all ``reasonable" rate functions  satisfy.

\begin{definition}[Valid rate function]
A function $L:\mathbb{R}_+\times\mathbb{N}\rightarrow\mathbb{R}_{+}$ is called a \textit{valid rate function} if for every $R_1>R_2>0$,
\begin{align}
    \liminf_{n\rightarrow\infty}
    \frac{L(R_1,n)}{L(R_2,n)}
    >1.
    \label{eq:valid-rate-function}
\end{align}
\label{def:valid_rate_function}
\end{definition}
Roughly speaking, this condition requires that any fixed increase in the rate parameter eventually produces a non-vanishing multiplicative increase in the rate function, at least for large enough $n$. Now we can define the notion of equivalent rate functions.

\begin{definition}[Equivalent rate functions]
Two valid rate functions $L_1$ and $L_2$ are called equivalent if for all $R>0$, 
\begin{align}
    \lim_{n\rightarrow\infty}
    \frac{L_1(R,n)}{L_2(R,n)}
    =1.
    \label{eq:equivalent-rate-functions}
\end{align}
\label{def:equivalent_rate_function}
\end{definition}

\begin{lemma}[Invariance under equivalent rate functions]
Let a sequence of sets of Boolean functions $\{\mc F\}_m$ be given, and let $\cbfc$ denote the computation capacity.  Let $L_1$ and $L_2$ be two equivalent valid rate functions. We have the following statement:
\begin{itemize}
\item If $R$ is an achievable computation rate for $\{\mc F\}_m$ with the \rf $L_1(R,n)$, then $R$ is also an achievable computation rate for $\mc F$ with the \rf $L_2(R,n)$.
\item If $\cbfc\leq A$ with the \rf $L_1(R,n)$, then we also have $\cbfc\leq A$ with the \rf $L_2(R,n)$.
\end{itemize}
\label{lemma:equivalence}
\end{lemma}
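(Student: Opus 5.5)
The first item will be proved directly from Definition~\ref{def:rate}, using the validity of $L_2$ to absorb the asymptotic ratio $L_1/L_2\to 1$ into a slightly smaller slack $\eta$. The second item will then follow by contraposition from the first, with the roles of $L_1$ and $L_2$ exchanged (equivalence is symmetric).

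\emph{First item.} Suppose $R$ is achievable with $L_1$, and fix $\lambda_1,\lambda_2>0$ and $\eta>0$. We may assume $\eta<R$, since otherwise the requirement becomes trivial or vacuous. Set $\eta'=\eta/2$, so that $R-\eta' > R-\eta>0$.

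By validity of $L_2$, applied to the pair $R_1=R-\eta'>R_2=R-\eta$, there exist $\delta>0$ and $n_1$ such that
\begin{align*}
L_2(R-\eta',n)\geq (1+\delta)L_2(R-\eta,n)\quad\text{for all } n\geq n_1.
\end{align*}
By equivalence at the rate value $R-\eta'$, there is $n_2$ such that
\begin{align*}
L_1(R-\eta',n)\geq (1+\delta)^{-1}L_2(R-\eta',n)\quad\text{for all } n\geq n_2.
\end{align*}
Achievability with $L_1$, applied with slack $\eta'$, gives $n_3$ such that for all $n\geq n_3$ there is an $(n,m,\mc F_m,\lambda_1,\lambda_2)$ BFC code with $m\geq L_1(R-\eta',n)$.

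For $n\geq \max(n_1,n_2,n_3)$ these combine to
\begin{align*}
m\geq L_1(R-\eta',n)\geq (1+\delta)^{-1}L_2(R-\eta',n)\geq L_2(R-\eta,n),
\end{align*}
which is exactly achievability of $R$ with $L_2$.

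The key point is that equivalence is only an asymptotic ratio statement at a \emph{fixed} rate. It cannot by itself guarantee $L_1\geq L_2$ at the same rate value. Validity of $L_2$ supplies the strict multiplicative margin $1+\delta$ needed to pay for the loss $(1+\delta)^{-1}$. The only care needed is to choose $\delta$ from validity first and then pick $n_2$ for that $\delta$, so that the quantifiers come in the right order.

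\emph{Second item.} Suppose $\cbfc\leq A$ with $L_1$, but, for contradiction, $\cbfc> A$ with $L_2$. Then some rate $R>A$ is achievable with $L_2$. Applying the first item with the roles of $L_1,L_2$ swapped, which is allowed since both are valid and the defining limit of equivalence is symmetric, shows that $R$ is achievable with $L_1$. Hence $\cbfc\geq R>A$ with $L_1$, a contradiction.

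No step here is a real obstacle.
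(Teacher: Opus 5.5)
Your proof is correct. For the first item it is essentially the paper's argument. Validity of $L_2$ at two nearby rates gives a multiplicative margin (the paper's $\kappa$, your factor $1+\delta$) that absorbs the ratio $L_2/L_1\to 1$ from equivalence. The paper passes from slack $\eta$ for $L_1$ to slack $2\eta$ for $L_2$, and you pass from $\eta/2$ to $\eta$; these are the same.

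The second item is where you genuinely differ. The paper does not route it through the first item. It reads ``$\cbfc\le A$ with $L_1$'' as a code-by-code bound: for every slack $\delta>0$ and all large $n$, every code satisfies $m\le L_1(A+\delta,n)$. This is the form in which Theorem~\ref{thm:general_converse} is stated, and the paper transfers that bound to $L_2$ directly by the same ratio-plus-validity manipulation. You instead take $\cbfc$ literally as the supremum of achievable rates and get the second item by contraposition from the first, with $L_1$ and $L_2$ swapped.

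Your route is shorter and uses nothing beyond the definitions. It also shows that the sets of achievable rates under $L_1$ and $L_2$ coincide, so the two capacities are equal. The paper's route aims at a stronger conclusion: a bound holding for every code at fixed $(\lambda_1,\lambda_2)$ carries over. Your contrapositive does not give this; it only transfers ``no rate above $A$ is achievable.''

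Note, however, that step (a) of the paper's converse chain uses validity in the wrong direction. In the paper's notation, validity gives $L_2(R+\delta,n)\ge\kappa L_2(R+\delta/2,n)$, not $L_2(R+\delta,n)\le\kappa^{-1}L_2(R+\delta/2,n)$, so the slack cannot be shrunk from $\delta$ to $\delta/2$. The repair is your first-item pattern run in the other direction. Start from $L_1(R+\delta/2,n)$, which the hypothesis allows since it holds for every slack, and end at $L_2(R+\delta,n)$. Your contrapositive argument avoids this issue entirely.
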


The proof of this result in given in Appendix \ref{appendix:proof_equivalence}. This result shows that if two valid rate functions are equivalent, we can state the same capacity result with either of them. For each example, it is easy to check that the two rate functions $L(R,n):=Rn$ and $L'(R,n):=Rn+a$ are valid and equivalent rate functions. In general, the usefulness of the result will be demonstrated in Proposition \ref{prop:special_S}, in particular in Case 6) where a direct calculation of rate function given in Theorem \ref{thm:capacity_bfc} is difficult.

\subsection{Examples: $\wcf(S)$ and $\wcf(\leq S)$  for different $S$}
\label{sec:example_BFC}

Before presenting the main results, we first give examples of the function classes $\wcf(S)$ and $\wcf(\leq S)$ for different choices of $S$, and comment on connections to semantic evaluations in propositional logic, and other existing problem formulations in the literature.

In this subsection, we use the notation $\ve b =(\ve b_1,\ve b_2,\ldots, \ve b_m)$ where each $\ve b_i$ is a Boolean variable.

\begin{example}[$\wcf(c)$]\label{example:S_constant} Let $c$ be a constant not depending  on $m$. A function in the class $\wcf(c)$ evaluates to $1$ on exactly $c$ input sequences. A special case is  $c=1$, where the functions in the class  $\wcf(1)$ can be expressed explicitly as
\begin{align}
f_j^{id}(\ve b) := \prod_{i\in A_j}\ve b_i \cdot \prod_{i\in A_j^c}(1-\ve b_i)
\label{eq:f_id}
\end{align}
for all subsets $A_j\subseteq \intset{m}, j=1,\ldots, 2^m$. Computing $\wcf(1)$ is equivalent to  the  \textit{identification via channel} problem \cite{ahlswede_identification_1989} (more details in Section \ref{sec:Ahlswede_models}). Indeed, $f_j^{id}(\ve b)=1$ if and only if $\ve b$ is the unique message that corresponds to the set $A_j$ (and there are in total $2^m$ of them).
\end{example}

\begin{example}[$\wcf(\leq cm^{\beta})$] Let $c>0$ and $\beta$ a positive integer not depending on $m$. If we ask the question: ``Are there exactly $\beta$ inputs equal to $1$"? This question can be formulated by the Boolean function
\begin{align*}
h_\beta(\ve b) := \ve 1_{\sum_{i=0}^m \ve b_i=\beta}
\end{align*}
The weight of this function is ${m\choose \beta}$ which is upper and lower bounded as $\frac{m^\beta}{\beta^\beta}\leq {m\choose \beta}\leq \frac{m^\beta}{\beta!}$. Hence this function belongs to the class $\wcf(\leq cm^{\beta})$ for the choice $c=1/\beta!$.  Similarly, the question ``are there at most $\beta$ inputs equal to $1$" corresponds to the Boolean function
\begin{align*}
\tilde h_{\beta}(\ve b):=\ve 1_{\sum_{i=0}^m \ve b_i\leq \beta}
\end{align*}
These functions are known as the \textit{threshold function}, whose weight is upper bounded as $\sum_{i=1}^\beta {m\choose i}\leq (\frac{e}{\beta})^\beta m^\beta$. They belong to the set $\wcf(\leq cm^{\beta})$ with $c=(\frac{e}{\beta})^\beta$.
\end{example}

\begin{example}[$\wcf(c2^{\gamma m})$]\label{example:t_bit}  Let $c$ and $\gamma\in(0,1]$ be some constant not depending on $m$. If we want to recover the $t$-th bit of the message, we can use the function
\begin{align}
f_t^{bit}(\ve b) := \ve b_t
\label{eq:f_tbit}
\end{align}
It is easy to see that the set of functions $f_t^{bit}, t=1,\ldots, m$ belongs to the  class $\wcf(\frac{1}{2}2^{m})$. As another example, consider the function 
\begin{align*}
f_{S_k}^{\text{AND}}(\ve b):= \prod_{i\in S_k}\ve b_i
\end{align*}
where $S_k\subseteq\{1,\ldots, m\}$ with  cardinality $k$. This function performs the logic AND operation on a subset (of size $k$) of inputs. The Hamming weight of this class of function is $2^{m-k}$. When $k \geq (1-\gamma)m$, these functions belong to the class $\wcf(\leq 2^{\gamma m})$.
\label{example:f_and}
\end{example}

\begin{example}[ranking]\label{example:ranking}
Let $int:\{0,1\}^m\rightarrow \mathbb N$ be the function that maps a binary sequence to an integer, where the sequence represents the binary expansion of that integer. In other words
\begin{align}
int(\ve b):= \sum_{i=1}^m b_i2^{m-i}
\label{eq:int}
\end{align}
Then we can define the ranking function
\begin{align}
f_{r}^{\text{rank}}(\ve b) := \ve 1_{int(\ve b)\leq r}
\label{eq:f_order}
\end{align}
for $r=0,\ldots, 2^m-1$. It is clear that the set of functions $\{f_r^{\text{rank}}, r=0,\ldots, S-1\}$ belongs to the set $\wcf(\leq S)$, for any $S=1,\ldots, 2^m$.
\end{example}

\subsection{Formulae in propositional logic}
The notion \textit{semantic communication} roughly refers to the communication problems where only the ``meaning", or semantics of messages is to be communicated, rather than the entirety of the message. Semantics exist in the context of languages,  with  \textit{propositional logic} being perhaps the simplest form, where there are only two semantic values, $\mathsf{True}$ (denoted by $\lt$) and $\mathsf{False}$ (denoted by $\lf$). In this subsection we comment on the connection between Boolean function computation and the semantic evaluation in propositional logic.

In the setup of propositional logic \cite{enderton_mathematical_1972},  $\mc P=\{p_1,\dots,p_m\}$ denotes a finite set of \emph{propositional atoms}. The set of \emph{well-formed formulae} \(\mc L(\mathcal{P})\) is defined recursively as the smallest set of finitely long tuples of elements of $\mc P \cup \{\neg, \lor, \land, (, )\}$ with the properties:
\begin{itemize}
    \item for every $p \in \mc P$, we have $p \in \mc L(\mathcal{P})$,
    \item if $\varphi \in \mc L(\mathcal{P})$, then $\neg (\varphi) \in \mc L(\mathcal{P})$,
    \item if $\varphi \in \mc L(\mathcal{P})$ and $\psi \in \mc L(\mathcal{P})$, then $(\varphi) \lor (\psi) \in \mc L(\mathcal{P})$ and $(\varphi) \land (\psi) \in \mc L(\mathcal{P})$,
\end{itemize}
where \(p_i\in\mathcal{P}\), and \(\neg\), \(\land\), \(\lor\) are the logical connectives \emph{negation}, \emph{conjunction}, and \emph{disjunction}. For convenience, we do not always write out all the parentheses of every formula (when the order of operations is clear by common conventions), and we also define the connectives
\begin{itemize}
    \item $\varphi \to \psi := \neg \varphi \lor \psi$ \emph{(implication)}
    \item $\varphi \leftrightarrow \psi := (\varphi \land \psi) \lor (\neg \varphi \land \neg \psi)$ \emph{(equivalence)}
    \item $\varphi \oplus \psi := (\varphi \land \neg \psi) \lor (\neg \varphi \land \psi)$ \emph{(exclusive or)}.
\end{itemize}

A \emph{truth assignment} is a function \(\tau:\mc P\to\{\lt,\lf\}\) assigning a truth value to each propositional atom.
The \emph{semantics} of formulae is given by the evaluation map \(v_\tau : \mc L(\mathcal{P})\to\{\lt, \lf\}\), defined recursively via
\begin{align*}
v_\tau(p_i) &= \tau(p_i)\\
v_\tau(\neg \varphi)&=\begin{cases}
\lt \quad \text{if } v_\tau(\varphi) = \lt\\
\lf\quad \text{otherwise}
\end{cases} \\
v_\tau(\varphi\land\psi) &=\begin{cases}
\lt\quad \text{if } v_\tau(\varphi) = \lt \text{ and }  v_\tau(\psi)= \lt\\
\lf \quad \text{otherwise}
\end{cases}\\
v_\tau(\varphi\lor\psi) &=\begin{cases}
\lt\quad \text{if } v_\tau(\varphi) = \lt \text{ or }  v_\tau(\psi)= \lt\\
\lf \quad \text{otherwise.}
\end{cases}
\end{align*}
In this way, each formula \(\varphi\) defines a Boolean-valued function of its atomic propositions under the truth value assignment $\tau$.   We call two formulae $\varphi_1$ and $\varphi_2$ \textit{tautologically equivalent}, if they have the same truth value on all truth assignments of the atoms, namely $\forall \tau, v_{\tau}(\varphi_1)=v_{\tau}(\varphi_2)$. Otherwise we call them \textit{tautologically distinct}. With $m$ atoms, there are infinitely many formulae, but there are exactly $2^{2^m}$ tautologically distinct formulae, with $2^m$ being the number of possible truth value assignments.

The key observation is that Boolean function computation can be viewed as evaluating the semantics of a chosen formula by the receiver, where the truth assignment is determined by the transmitter. Specifically, let $\ve b = (\ve b_1,\ldots \ve b_m)\in\{0,1\}^m$ denote the message at the transmitter, and we associate each coordinate \(b_i\) with an atom \(p_i\in\mathcal{P}\) and define the truth assignment \(\tau_{\ve b}(p_i)=\lt\) if \(\ve b_i=1\) and \(\tau_{\ve b}(p_i)=\lf\) if \(\ve b_i=0\). Any Boolean function \(f\in\bfm\) can then be represented by a propositional formula \(\varphi_f\) such that for every input \(\ve b\in\{0,1\}^m\),
\begin{align*}
f(\ve b)=1 \;\;\Longleftrightarrow\;\; v_{\tau_{\ve b}}(\varphi_f)=\lt
\end{align*}
Thus, evaluating the Boolean function \(f\) on the message $\ve b$ is equivalent to evaluating the propositional formula \(\varphi_f\) under the truth assignment \(\tau_{\ve b}\). Furthermore, there is a one-to-one mapping between the $2^{2^m}$ different Boolean functions and $2^{2^m}$ tautologically distinct formulae. The logic \text{AND} function defined in Example \ref{example:f_and} gave an example of this connection. As another simple example, the formula  that calculates the parity of three atoms
\begin{align*}
\varphi_{\text{parity}} := p_1 \oplus p_2 \oplus p_3
\end{align*}
corresponds to the Boolean function
\begin{align*}
f_{\varphi_{\text{parity}} }(\ve b) := (\ve b_1+\ve b_2+\ve b_3)\pmod 2.
\end{align*}

\begin{example}[Hamming weight of DNF]
Disjunctive normal forms (DNF) are a canonical form of formulae consisting of a disjunction of one or more conjunctions of one or more literals (an atomic or its negation). For example, $(p_1\land p_2\land \neg p_3)\lor p_4$ is in DNF where as $\neg(p_1\land p_2)$ is not. It is known that \cite[Coro. 15C]{enderton_mathematical_1972} in propositional logic, any well-formed formula in $\mc L(\mc P)$ is tautologically equivalent to a formula in DNF. If we have a DNF with $t$ conjunction terms which are mutually exclusive (i.e. no truth assignment can satisfy more than one conjunction term), and each conjunction contains $k$ atoms, then the Hamming weight of this DNF is $t2^{m-k}$. In general if there are overlapping atoms in different conjunction terms, the Hamming weight of this DNF is smaller than or equal to $t2^{m-k_{\text{min}}}$ where $k_{\text{min}}$ is the number of atoms in the shortest conjunction term.
\end{example}

\subsection{Connection to identification via channels and other Ahlswede models in \cite{ahlswede_general_2008}}
\label{sec:Ahlswede_models}

The Boolean function computation problem is a generalization of the identification via channels problem. A $(n, N, \lambda_1, \lambda_2)$ identification (ID) code is a collection $\{(Q_i, D_i), i=1,\ldots, N\}$ with probability distributions $Q_i$ on $\mc X^n$ and $D_i\subseteq\mc Y^n$ such that the two types of error probability satisfy
\begin{align*}
&Q_iW(D_i^c)\leq \lambda_1, i=1,\ldots, N\\
&Q_iW(D_j)\leq \lambda_2, i, j =1,\ldots, N, i\neq j
\end{align*}
As discussed in Example \ref{example:S_constant} above, an $(n, N, \lambda_1, \lambda_2)$ ID code (assuming $N$ is a power of $2$ for simplicity) is equivalent to an $(n, \log N, \mc F, \lambda_1,\lambda_2)$ BFC code with  $\mc F:=\{f_j^{id}, j=1,\ldots, 2^m\}$ where $f_j^{id}$ is defined in  (\ref{eq:f_id}).

In \cite{ahlswede_general_2008}, Ahlswede proposed several models which generalize the identification via channel problem.  In all models,  there is a family of partitions $\Pi = \{\pi_j\}_j$ where each   $\pi_j=\{P_{j1},\ldots, P_{jr}\}$ is a partition of the messages $\mc M$, where we call $P_{ji}$ a member of the partition $\pi_j$. The communication problem is described as follows:
\begin{itemize}
\item the transmitter picks a message $i$ from $\mc M$, and transmits a codeword associated to the message $i$
\item upon receiving the channel output,  for any partition $\pi_j$ from the family $\Pi$, the decoder needs to decide which member of the partition $\pi_j$ contains the true message $i$
\end{itemize}
Identifying the set of messages $\mc M$ with $\{0,1\}^m$, our model is a special case of the above model, where each partition $\pi_j$ corresponds to a Boolean function $f_j$, which partitions $\mc M$ into two (i.e. $r=2$) disjoint sets, corresponding to $f^{-1}[1]$ and $f^{-1}[0]$, respectively. In particular,  the  $K$-identification problem~\cite{ahlswede_general_2008} (see Model 3 below)  is identical to our problem with the choice of functions $\wcf(K)$. We repeat the communication models discussed in~\cite{ahlswede_general_2008} (in its original numeration) below, and connect them to the BFC problem.
\begin{itemize}
\item Model 2 (identification): $\Pi_I=\{\pi_j: \pi_j=\{ \{j\}, \mc M\backslash \{j\}\}, j\in \mc M\}$. This corresponds to our problem with the choice  of functions to be $f_j^{id}$ defined in \eqref{eq:f_id}. Notice that we have
\begin{align*}
\{j\} = (f_j^{id})^{-1}[1]
\end{align*}
\item Model 3 ($K$-identification):  $\Pi_K=\{\pi_{\mc S}: \pi_{\mc S}=\{\mc S, \mc M\backslash S\}, |\mc S|= K, \mc S\subseteq \mc M\}$. This corresponds to our problem with the choice  of functions to be the set $\wcf(K)$. Here we have
\begin{align*}
\mc S = f^{-1}[1] \text{ for  } f\in \wcf(K)
\end{align*}
\item Model 4 (ranking): $\Pi_R = \{\pi_r: \pi_r = \{\{1,\ldots, r\}, \{r+1,\ldots, 2^m\}\}, 1\leq r\leq 2^m\}$. This corresponds to our problem with the choice of functions $f_r^{\text{rank}}$ defined in \eqref{eq:f_order}. Notice that
\begin{align*}
\{1,\ldots, r\} = (f_r^{\text{rank}})^{-1}[1]
\end{align*}
\item Model 5 (general binary questions) : $\Pi_B = \{\pi_{\mc A}: \pi_{\mc A}=\{\mc A, \mc M\backslash \mc A\}, \mc A\subset \mc M\}$. This corresponds to our problem with the set of functions $\{f_{\mc A}, \mc A\subset \mc M\}$, where $f_{\mc A}$ is the function whose preimage under $1$ is the set $\mc A$.
\item Model 6 ($t$-th bit): $\Pi_C = \{\pi_t: \pi_t =\{  \{\ve b\in\{0,1\}^m: \ve b_t=1\}, \{\ve b\in\{0,1\}^m: \ve b_t=0\}\} , t=1,\ldots, m \}$. This corresponds to our problem with the choice of functions to be $f_t^{bit}$ defined in \eqref{eq:f_tbit}. Notice that
\begin{align*}
 \{\ve b: \ve b_t=1\} = (f_t^{bit})^{-1}[1]
\end{align*}
\end{itemize}

It may be worth pointing out that  Shannon's original transmission problem (recover the transmitted message) is not covered in the BFC framework, but can be formulated in Ahlswede's partition framework (see Model 1 in \cite{ahlswede_general_2008}). Nevertheless, it can be formulated as a \textit{multiple Boolean functions computation} problem.  For example, if we extend Model 6 above, asking the the receiver to simultaneously compute $m$ functions $f_1^{bit},\ldots, f_m^{bit}$, then we effectively recover the original message. We only consider computing a single function in this paper, and leave multiple function computation for future work.

\section{Main results}
We state the main results of our paper in this section.  For Boolean functions with $m$ inputs, we consider the case when Hamming weight $S(m)$ takes integer values in the interval $[1, 2^m]$. The case  $\wcf(0)$ is trivial as this set only contains the constant function $0$ (hence the computation rate is infinite for every rate function). Notice that for the same reason $\wcf(2^m)$ is  trivial, whereas $\wcf(\leq 2^m)$ is not.

\begin{theorem}[A characterization of $\cbfc$ for $\wcf(S)$ and $\wcf(\leq S)$]
Assume the channel $W(\cdot | x^n)$ has Shannon capacity $C$.  Let $S:\mathbb N\rightarrow\mathbb N^+$ be a strictly increasing function and $S^{-1}$ the inverse function of $S$.  Let $\cbfc$ denote the computation capacity for the sequence of sets of functions $\{\wcf(S(m))\}_m$. We have the following results:
\begin{itemize}
\item (Small $S(m)$ regime) If $\frac{\log S(m)}{\log m}\rightarrow a$ for some $a\geq 0$ as $m\rightarrow \infty$, then $\cbfc\in[\frac{C}{1+2a} ,\frac{C}{1+a}]$ with the \rf $L(R,n):=2^{nR}$. 
\item  (Medium $S(m)$ regime) If $\frac{\log S(m)}{\log m}\rightarrow \infty$ as $m\rightarrow \infty$ and $\frac{\log S(m)}{m}\leq \alpha$ for some $\alpha\in(0,1)$, then $\cbfc\in[C/2, C]$ with the \rf $L(R,n):=S^{-1}(2^{nR})$. 
\item (Large $S(m)$ regime) If $\frac{\log S(m)}{m}\rightarrow 1$ as $m\rightarrow \infty$ and $S(m)\leq 2^{m-1},$ then $\cbfc=C$ with the \rf $L(R,n):=nR$.
\item The same results hold for the sequence of sets of functions $\{\wcf(\leq S(m))\}_m$. Moreover in this case, the result for the large $S(m)$ regime (i.e., $\frac{\log S(m)}{m} \rightarrow 1$) holds for all $S\leq 2^{m}$.
\end{itemize}
\label{thm:capacity_bfc}
\end{theorem}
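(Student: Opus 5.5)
We prove achievability and converse separately. Both reduce the problem to a counting statement relating $m$, $S$ and $n$ that holds uniformly across regimes, and the three regimes then come out by computing the resulting rate function.

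\emph{Achievability.} The plan is to concatenate a Shannon transmission code with a ``function-hashing'' outer code, in the spirit of Ahlswede--Dueck's identification construction. For the large regime, transmit the message $i\in\{0,1\}^m$ itself with a channel code of rate close to $C$. The receiver decodes $\hat i$ and outputs $f(\hat i)$. Both error probabilities are then bounded by the transmission error, so $m\approx nC$ is achievable with $L(R,n)=nR$, for any $S$ (in particular for all $S\le 2^m$ in the $\wcf(\le S)$ case).

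For the small and medium regimes, the encoder uses shared-in-distribution randomness. It chooses a family of maps $T_i:\intset{K}\to\intset{N}$, one for each $i\in\{0,1\}^m$. It picks $u$ uniformly in $\intset{K}$ and transmits the pair $(u,T_i(u))$ with a transmission code. This needs $\log K+\log N\approx nC$. For $f$ with $A=f^{-1}[1]$, $|A|\le S$, the decoder accepts iff the decoded pair satisfies $\hat v\in\{T_{i'}(\hat u):i'\in A\}$.

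The false negative error is then just the channel error. The false positive error for $i\notin A$ is at most the channel error plus
\[
\Pr_u\big[T_i(u)\in\{T_{i'}(u):i'\in A\}\big].
\]
We pick the $T_i$ at random (uniform) and use a union bound over at most $S$ colliding messages: this probability is at most $S/N$ in expectation. A Chernoff bound plus a union bound over all pairs $(i,A)$ (at most $2^m\cdot 2^{mS}$ events) shows that a good deterministic family exists provided $K\gtrsim mS$ and $N\gtrsim S$ (up to factors going to infinity slowly). Hence $\log K+\log N\approx 2\log S+\log m$ must be at most $nC$.

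In the small regime, $\log S\approx a\log m$, so this reads $(1+2a)\log m\le nC$, giving rate $C/(1+2a)$ with $L=2^{nR}$. In the medium regime, $\log S$ dominates $\log m$, giving $2\log S(m)\le nC$, i.e.\ $m=S^{-1}(2^{nC/2})$.

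\emph{Converse.} The plan is a resolvability/counting argument like the Ahlswede--Dueck identification converse. First approximate each $Q_i$ by an $M$-type distribution on $\mc X^n$, with $\log M\approx nC$ (soft-covering / Han--Verd\'u approximation of output statistics). The approximated code still has errors $\lambda_1+\epsilon,\lambda_2+\epsilon$.

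Now the key step. For distinct $i,i'$, pick $f$ with $i\in f^{-1}[1]$ and $i'\notin f^{-1}[1]$; such an $f$ exists whenever $S<2^m$. This forces $Q_i\neq Q_{i'}$. More strongly, for a fixed $i'$, the distributions indexed by an $S$-subset must be ``separated'' from the rest. By counting the number of distinct $M$-type distributions (at most $|\mc X|^{nM}$), we get $2^m\le |\mc X|^{nM}$, i.e.\ $m\lesssim n2^{nC}$. That bound gives the small-regime bound with $a=0$.

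To gain the factor $1+a$, use the fact that the code distinguishes all $\binom{2^m}{S}$ subsets. I expect this to be the main obstacle: showing that $\binom{2^m}{S}$ sets are distinguishable forces roughly $S\log(2^m/S)\lesssim 2^{n(C+\epsilon)}$. Then:
\begin{itemize}
\item small regime: $m\cdot m^{a}\lesssim 2^{nC}$, i.e.\ rate $C/(1+a)$;
\item medium regime: $S(m)\lesssim 2^{nC}$, i.e.\ $m\le S^{-1}(2^{nC})$;
\item large regime: the transmission-type bound $m\le nC$. Here we argue that a code for $\wcf(S)$ with $S\ge 2^{m(1-o(1))}$ lets one decode many bits, giving a Fano-type bound.
\end{itemize}

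The counting step I would attempt via a packing argument. The acceptance sets $D_f$ applied to the $M$-type approximations yield distinct ``patterns'' in a finite set of size $2^{2^{n(C+\epsilon)}}$ in a suitable quantization. I would verify this carefully, and derive the $\wcf(\le S)$ versions by monotonicity of $|\wcf(\le S)|$ together with Lemma~\ref{lemma:equivalence} to clean up the rate functions.
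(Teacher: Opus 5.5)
\textbf{Achievability.} Your achievability part is correct but takes a different route from the paper. The paper uses the Ahlswede--Dueck family of codeword sets $A_i$ with pairwise intersections below $\lambda M'$, $\lambda=\delta/S$ (Proposition~\ref{prop:maximal_code}). You instead send a random tag $(u,T_i(u))$. A Chernoff bound with a union bound over the roughly $2^{m(S+1)}$ pairs $(i,A)$ gives $\log K+\log N\approx\log m+2\log S$. This is the same condition $mS^2\lesssim 2^{nC}$ as Theorem~\ref{thm:general_achievability}, reached by an arguably simpler random-coding argument.

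\textbf{Converse gap.} The converse has a genuine gap, exactly at the step you flag. Your injectivity count $2^m\le|\mc X|^{nM}$ only yields $\log m\lesssim nC$. That gives neither the factor $1+a$ in the small regime nor anything in the medium regime, where you need $\log S(m)\lesssim nC$. The remedy you propose does not work as stated, on either side:
\begin{itemize}
\item Encoder side: two $S$-subsets $A,A'$ differing in one element are not separated in a way that survives approximation. The averaged inputs $\frac1S\sum_{i\in A}Q_i$ and $\frac1S\sum_{i\in A'}Q_i$ are within $1/S$ in total variation, so their $M$-type approximations can coincide.
\item Decoder side: the thresholded pattern of $D_f$ on $\{\tilde Q_i\}_i$ is just the indicator of $f^{-1}[1]$ in $\{0,1\}^{2^m}$. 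Bounding the number of such patterns by $2^{2^{n(C+\epsilon)}}$ is the claim itself. Quantizing $(W(D_f|x))_x$ gives counts exponential in $|\mc X|^n$ (or in $2^mM$), not in $2^{nC}$.
\end{itemize}

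\textbf{The missing idea.} You need to pass to a well-separated subfamily and reduce to an identification code. By Gilbert's bound for constant-weight codes (Lemma~\ref{lemma:gilbert_cw} with $T=2^m$), there is $\mc F\subset\wcf(S)$ whose preimages pairwise share at most $\alpha S$ points. This family still satisfies $\log|\mc F|\ge\frac{\alpha}{2}S(m-\log S)$ (Lemma~\ref{lemma:bcf_to_id}). Set $\tilde Q_f:=\frac1S\sum_{i\in f^{-1}[1]}Q_i$ and keep $D_f$; this gives an $(n,|\mc F|,\lambda_1,\lambda_2+\alpha)$ ID code. Taking $\alpha<1-\lambda_1-\lambda_2$, the ID converse (Theorem~\ref{thm:converse_id}) gives $\log|\mc F|\le 2^{n(C+\delta)}$, i.e.\ $\log S+\log(m-\log S)\lesssim nC$. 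This is the rigorous version of your target $S\log(2^m/S)\lesssim 2^{nC}$. You only need a subfamily whose log-size is a constant fraction of $\log\binom{2^m}{S}$, not all $\binom{2^m}{S}$ subsets. This inequality yields the small- and medium-regime bounds, and the paper obtains the large regime from the same lemma.

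\textbf{Smaller issues.} Your Fano sketch for the large regime is incomplete as written. The bit functions $f_t^{bit}$ lie in $\wcf(S)$ only when $S=2^{m-1}$, so for general $S=2^{m(1-o(1))}$ you would need, e.g., a genie-aided construction. Also, the $\wcf(\le S)$ converses follow from class containment, not from monotonicity of $|\wcf(\le S)|$: use $\wcf(S)\subseteq\wcf(\le S)$, and $\wcf(\le 2^{m-1})\subseteq\wcf(\le S)$ for $S\ge 2^{m-1}$.
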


The above theorem completely characterizes the rate function for the BFC problem for the function classes $\mc F(S)$ and $\mc F(\leq S)$, respectively. Furthermore, the computation capacity is determined within upper and lower bounds which differ by a factor of at most $2$ for all choices of $S$. Now we specialize the result in Theorem \ref{thm:capacity_bfc} to several choices of $S(m)$, all of them are in the small or medium regime.

\begin{proposition}[Results for certain choices of $S(m)$]
Assume the channel $W(\cdot | x^n)$ has a Shannon capacity $C$, and let $c>0$ be a constant independent from $m$ and $n$.  Let  $\cbfc$ denote the computation capacity for $\{\wcf(S(m))\}_m$ with a \rf specified below. We have the following statements (S for small, M for medium):
\begin{itemize}
\item (1, S) for $S=c$ (a positive integer), $\cbfc = C$ with the \rf $L(R,n)=2^{Rn}$
\item (2, S) for $S=c(\log m)^b$ with $b>0$, $\cbfc = C$ with the \rf $L(R,n)=2^{Rn}$
\item (3, S) for $S=cm^{\beta}$ with $\beta>0$, $\cbfc \in[\frac{C}{1+2\beta}, \frac{C}{1+\beta}]$ with the \rf $L(R,n)=2^{Rn}$
\item (4, M) for $S=cm^{(\log m)^b}$ with $b>0$, $\cbfc \in [C/2, C]$ with the \rf $L(R,n)=2^{(Rn)^{1/(b+1)}}$
\item (5, M) for $S=c2^{m^{1/b}}$ with $b> 1$, $\cbfc \in[C/2,C]$ with the \rf $L(R,n)=(Rn)^b$
\item (6, M) for $S=c2^{m/\log m}$, $\cbfc \in[C/2, C]$ with the \rf $L(R,n)=Rn\log n$
\item (7, M) for $S=c2^{\gamma m}$ with $\gamma\in(0,1)$, $\cbfc \in[\max\{\frac{C}{2\gamma}, C\},  \frac{C}{\gamma}]$ with the \rf $L(R,n)=Rn$
\end{itemize}
The same results hold for the sequence $\{\wcf(\leq S(m))\}_m$.
\label{prop:special_S}
\end{proposition}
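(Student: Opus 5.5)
The plan is to derive every case of Proposition~\ref{prop:special_S} from Theorem~\ref{thm:capacity_bfc}, using Lemma~\ref{lemma:equivalence} to replace the theorem's rate function $S^{-1}(2^{nR})$ by the simpler equivalent forms listed in the proposition. The same computation works for $\wcf(S)$ and $\wcf(\leq S)$, since the theorem treats both classes identically in the small and medium regimes. For each case I first compute $\log S(m)/\log m$ and $\log S(m)/m$ to place $S$ in a regime.

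\textbf{Cases (1)--(3): small regime.} For $S=c$ and $S=c(\log m)^b$ we have $\log S/\log m\to 0$, so $a=0$ and the interval $[\frac{C}{1+2a},\frac{C}{1+a}]$ collapses to $\{C\}$. For $S=cm^\beta$ we get $a=\beta$, which gives the stated bounds directly.

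\textbf{Cases (4)--(6): medium regime with a nontrivial rate function.} Here $\log S/\log m\to\infty$ and $\log S/m\to 0\le\alpha$, so $\cbfc\in[C/2,C]$ with rate function $S^{-1}(2^{nR})$. Since $S$ is only defined on integers and need not be strictly increasing because of the constant $c$, I interpret $S^{-1}(x)$ as the inverse of the continuous extension. I then solve $\log S(m)=nR$ asymptotically and check that the solution is equivalent, in the sense of Definition~\ref{def:equivalent_rate_function}, to the stated $L$.

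For (4), $\log c+(\log m)^{b+1}=nR$ gives $m=2^{(nR-\log c)^{1/(b+1)}}$. This is not ratio-equivalent to $2^{(Rn)^{1/(b+1)}}$, because the exponent differs by $\Theta(n^{-b/(b+1)})\to 0$. However, $2^{x}$ with $x\to x+o(1)$ has ratio tending to $1$, so equivalence does hold.

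For (5), $m=(nR-\log c)^b$, whose ratio to $(nR)^b$ tends to $1$.

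For (6), we must invert $m/\log m=nR-\log c$. The solution is $m=(nR)\log(nR)(1+o(1))$, which is equivalent to $Rn\log n$ since $\log(nR)/\log n\to1$. This is the step the paper flags as needing Lemma~\ref{lemma:equivalence}.

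In each of cases (4)--(6) I also verify that both rate functions are valid, i.e.\ that $\liminf L(R_1,n)/L(R_2,n)>1$. This is immediate since the ratios are $2^{\Theta(n^{1/(b+1)})}$, $(R_1/R_2)^b$, and $R_1/R_2$ respectively. Lemma~\ref{lemma:equivalence} then transfers both the achievability and the converse statements.

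\textbf{Case (7): the main obstacle.} This is the delicate case. With $S=c2^{\gamma m}$ we have $S^{-1}(2^{nR})=(nR-\log c)/\gamma$, which is equivalent to $nR/\gamma$. The medium-regime result therefore says that rates in $[C/2,C]$ are achievable under $L'(R,n)=nR/\gamma$ and that rates above $C$ are not. Because $L'(R,n)=L(R/\gamma,n)$ with $L(R,n)=Rn$, rescaling the rate by $1/\gamma$ converts these into $\cbfc\in[\frac{C}{2\gamma},\frac{C}{\gamma}]$ under $L(R,n)=Rn$. The scaling argument is direct from Definition~\ref{def:rate}, since the $\eta$ slack rescales as well.

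The extra lower bound $C$ comes from ordinary transmission. Using a Shannon code at rate close to $C$, the decoder recovers $i$ and evaluates $f(i)$, so $m\approx nC$ is achievable for every class of functions. Taking the maximum of the two lower bounds gives $\max\{\frac{C}{2\gamma},C\}$.

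I expect the main care to be needed in two places. The first is making the rescaling in case (7) rigorous with respect to the definition's $\eta$. The second is the non-monotonicity and rounding issues of $S^{-1}$ in cases (4)--(6).
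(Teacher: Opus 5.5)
Your proposal is correct and follows the paper's proof essentially step by step: you place each $S(m)$ in a regime of Theorem~\ref{thm:capacity_bfc}, and you use Lemma~\ref{lemma:equivalence} to replace $S^{-1}(2^{nR})$ by the simpler equivalent rate functions in cases (4)--(7). Your $(1+o(1))$ asymptotics for case (6) play the role of the paper's explicit sandwich bounds on $f^{-1}$; you then rescale by $1/\gamma$ and add the separation-based rate $C$ in case (7). Tidy two phrasings: in case (4) the two rate functions \emph{are} ratio-equivalent (your first sentence says otherwise before correcting itself). In case (7) the medium-regime result gives achievability of $C/2$ and a converse at $C$, not achievability of every rate in $[C/2,C]$.
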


Theorem \ref{thm:capacity_bfc} (and Proposition \ref{prop:special_S}) show that the \rf of the BFC problem crucially depends on the size of $S(m)$.  We call $S(m)$ \textit{small} if $\log S(m)=O(\log m)$, in which case the achievable $m$ scales as $O(2^n)$, the same as in the identification problem. If $\log S(m)$ grows faster than $O(\log m)$ but $S(m)$ grows slower than $O(2^m)$, we call $S(m)$ \textit{medium}, and the \rf depends on the specific form of $S(m)$. We call $S(m)$ \textit{large} if $\log S(m)/m\rightarrow 1$. Our result  shows that the communication efficiency of the BFC code for $\wcf(S)$ with a large $S(m)$ is the same as that of the Shannon problem, namely $m$ scales as $nC$ asymptotically. As discussed in Remark \ref{remark:symmetry}, without loss of generality, we only need to consider $S\in(0,\frac{1}{2}2^{m}]$ for the function class $\wcf(S)$ due to symmetry.  For the function class $\wcf(\leq S)$, the result in the small and the medium regime is the same as for $\wcf(S)$, however the large regime extends to $S=2^m$. We show the results pictorially in Figure \ref{fig}.

\begin{figure*}[!htb]
    \centering
    \begin{subfigure}[t]{0.5\textwidth}
        \centering
        \includegraphics[scale=0.35]{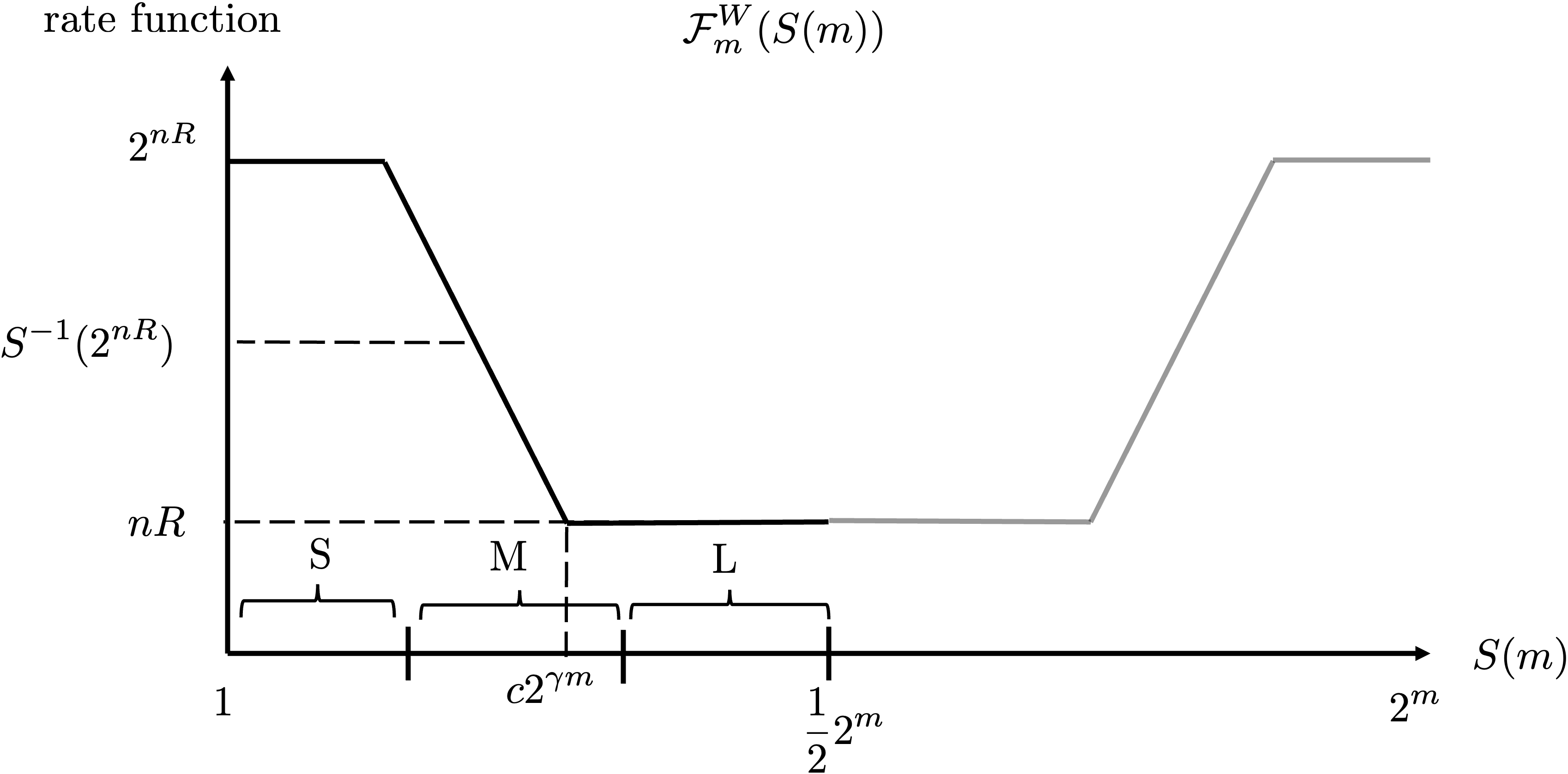}
        \caption{rate function vs. $S(m)$ for the class $\wcf(S(m))$}
    \end{subfigure}%
    \begin{subfigure}[t]{0.5\textwidth}
        \centering
        \includegraphics[scale=0.35]{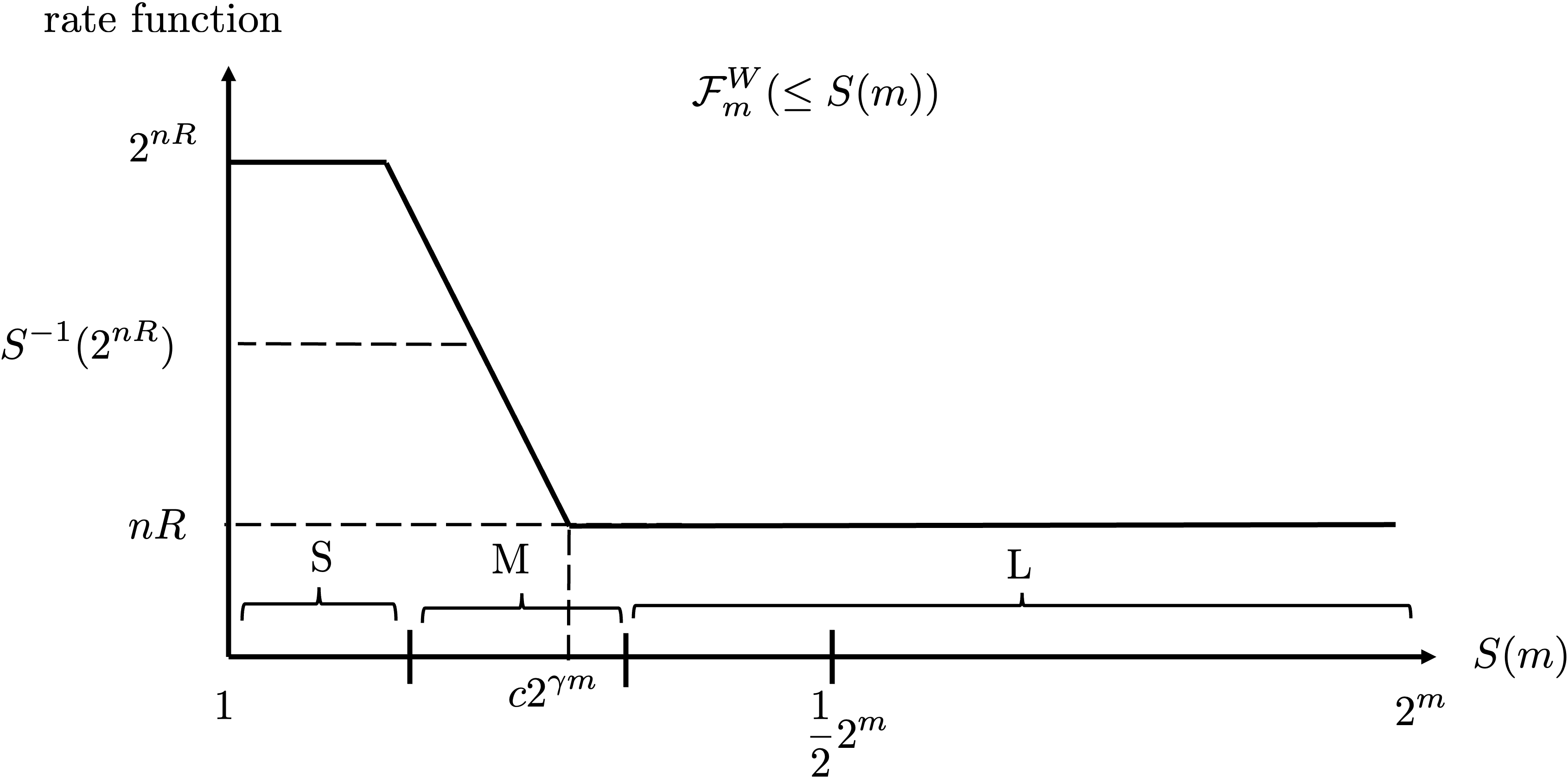}
        \caption{rate function vs. $S(m)$ for the class $\wcf(\leq S(m))$}
    \end{subfigure}
    \caption{The two figures qualitatively depict the behaviour of rate function for different $S(m)$. Subfigure (a) shows the rate function for the function class $\wcf(S(m))$. In the small $S(m)$ regime (S), the rate function is $2^{nR}$. In the medium $S(m)$ regime (M), the rate function varies as $S^{-1}(2^{nR})$. In particular, when $S(m)=c2^{\gamma m}$ for some $\gamma \in(0,1)$, the rate function ``drops" to $nR$. As $S(m)$ continues to grow into the large regime (L), the rate function stays at $nR$. The capacity results (hence the rate function) for $\wcf(S(m))$ are symmetric around the value $S=2^{m-1}$. Subfigure (b) shows the rate function for the function class $\wcf(\leq S(m))$, whose capacity result is the same as $\wcf(S(m))$ for $S(m)\leq 2^{m-1}$. However in this case, the rate is not symmetric, and  the computation capacity is $C$ with the rate function $nR$ for all $S\geq 2^{m-1}$.}
    \label{fig}
\end{figure*}

Theorem \ref{thm:capacity_bfc} is derived from the following two theorems, proving an achievability and a converse result, respectively.


\begin{theorem}[Achievability for $\wcf(\leq S)$]
Assume the channel $W(\cdot | x^n)$ has Shannon capacity $C>0$. Let $S:\mathbb N\rightarrow\mathbb N$ be a function, and define
\begin{align}
L_S(R,n):=\max\left\{m\in\mathbb N: mS(m)^2\leq 2^{nR}\right\}.
\label{eq:L_S}
\end{align}
Then $C$ is an achievable computation rate for the function class  $\wcf(\leq S(m))$ with the \rf $L_S(R,n)$.
\label{thm:general_achievability}
\end{theorem}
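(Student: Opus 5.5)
We prove Theorem~\ref{thm:general_achievability} by adapting the Ahlswede--Dueck identification construction to the case where each message $i\in\{0,1\}^m$ must be separated from a whole set of up to $S$ messages. We work in two layers.

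\emph{Outer layer.} Fix $\eta>0$, a rate $R'=C-\eta/2$, and a transmission code (Shannon) with $N=2^{nR'}$ codewords $\{u_1,\dots,u_N\}$ and decoding sets $\{E_1,\dots,E_N\}$ of maximal error $\epsilon_n\to0$.

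\emph{Inner layer.} To each message $i$ we assign a random subset $A_i\subseteq\intset N$ of size $K$, chosen i.i.d.\ uniformly. We set $Q_i$ to be uniform over $\{u_k:k\in A_i\}$. For a function $f_j$ with preimage $P_j:=f_j^{-1}[1]$, $|P_j|\le S$, the decoder uses
\[
D_j:=\bigcup_{k\in\cup_{l\in P_j}A_l}E_k .
\]

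\emph{False negative.} If $i\in P_j$, then $A_i\subseteq\cup_{l\in P_j}A_l$, so the error is at most $\epsilon_n$, deterministically.

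\emph{False positive.} If $i\notin P_j$, the error is at most
\[
\frac{|A_i\cap\cup_{l\in P_j}A_l|}{K}+\epsilon_n .
\]
So the code is good as soon as, for every $i$ and every set $P$ of size at most $S$ not containing $i$, the overlap satisfies $|A_i\cap \cup_{l\in P}A_l|\le\lambda K$.

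This is the key combinatorial step and the main obstacle. We need a union bound over $2^m$ choices of $i$ and up to $\binom{2^m}{S}\le2^{mS}$ choices of $P$. For fixed $i,P$, the union $U=\cup_{l\in P}A_l$ has size at most $SK$. A uniformly random $K$-subset then has expected overlap with $U$ at most $K\cdot SK/N$. If we choose $K$ with $SK/N\le\lambda/2$, a Chernoff bound for hypergeometric sampling gives
\[
\Pr\bigl(|A_i\cap U|>\lambda K\bigr)\le 2^{-cK}
\]
for some constant $c=c(\lambda)>0$. For this we condition on the $A_l$, $l\in P$, which are independent of $A_i$ since $i\notin P$. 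The union bound then needs
\[
cK > m(S+1)+O(1).
\]
Together with $K\le \lambda N/(2S)$, this requires
\[
m S^2\lesssim \frac{\lambda c}{2}\,N .
\]
Since $N=2^{nR'}$ and the constant factors are absorbed by the slack $\eta/2$ in the exponent, any $m$ with $mS(m)^2\le 2^{n(C-\eta)}$ is achievable for $n$ large. That is, $m\ge L_S(C-\eta,n)$, matching \eqref{eq:L_S}.

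The factor $S^2$ arises exactly here: one factor $S$ comes from the size of the union, the other from the $\log$ of the number of sets $P$. Some care is needed with integrality and with the case $K<1$; the latter only occurs when $mS^2$ is already too large, which is excluded.

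Finally, we check that the existence of a good realization of $\{A_i\}$ (with positive probability) yields deterministic $Q_i,D_j$ for all of $\wcf(\le S(m))$ simultaneously. This holds because the union bound already ranges over all $P$ with $|P|\le S$. Combined with $\epsilon_n\le\min(\lambda_1,\lambda_2/2)$ and $\lambda=\lambda_2/2$, this concludes the proof that $C$ is achievable with rate function $L_S$.
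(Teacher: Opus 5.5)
Your proof is correct, but its combinatorial core differs from the paper's. The coding layer is the same as in the paper (Lemma~\ref{lemma:error_probability}): the outer Shannon code, $Q_i$ uniform on a subset of codewords, $D_j$ the union of decoding regions, false negative at most $\epsilon_n$, and false positive at most the overlap fraction plus $\epsilon_n$. The difference is in how the subsets are built.

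\textbf{The paper's route.} The paper never controls overlaps with unions. It uses a greedy, Gilbert-type family (Proposition~\ref{prop:maximal_code}, a sharpened Ahlswede--Dueck maximal code) with only \emph{pairwise} bounds $|A_i\cap A_\ell|<\lambda M'$. It then bounds $|A_i\cap B_j|\le\sum_{\ell\in f_j^{-1}[1]}|A_i\cap A_\ell|$, which forces $\lambda=\delta/S$ and $\epsilon=\lambda/(4e)$. Both factors of $S$ then enter through the exponent $\lambda M'\approx\delta^2M/(4eS^2)$ in the lower bound on the number of sets.

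\textbf{Your route.} You keep the overlap fraction constant, $\lambda=\lambda_2/2$. You control $|A_i\cap\bigcup_{l\in P}A_l|$ directly, using a hypergeometric Chernoff bound and a union bound over all pairs $(i,P)$. One factor of $S$ comes from the union size ($K\lesssim\lambda N/S$), the other from $\log\binom{2^m}{\le S}\approx mS$.

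\textbf{What each buys.} The paper needs only binomial-coefficient counting and reuses the identification machinery with retuned parameters. Your argument needs a concentration inequality (Hoeffding's comparison of sampling without replacement with the binomial). In exchange, it yields a stronger, cover-free-type property directly and bypasses Proposition~\ref{prop:maximal_code} entirely. The two analyses lose at different places yet arrive at the same $mS^2$ trade-off.

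\textbf{Two bookkeeping points.} First, the number of admissible $P$ is $\sum_{s\le S}\binom{2^m}{s}\le(S+1)2^{mS}$ rather than $\binom{2^m}{S}$; this only adds $\log(S+1)$ to the exponent. Second, the threshold on $n$ is uniform in $m$: $mS(S+1)\le 2\cdot2^{n(C-\eta)}$, while $cSK\approx(c\lambda/2)\,2^{n(C-\eta/2)}$. Hence $m=L_S(C-\eta,n)$ is indeed admissible for all large $n$.
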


\begin{theorem}[Converse for $\wcf(S)$]
Assume the channel $W(\cdot | x^n)$ has Shannon capacity $C>0$. Let $S:\mathbb N\rightarrow\mathbb N^+$ satisfy $1\leq S(m)\leq c2^{\gamma m}$ for some $\gamma\in [0,1)$ and $c>0$.  If there exists an $(n,m, \wcf(S(m)),\lambda_1,\lambda_2)$ BFC code with $\lambda_1+\lambda_2<1$, then for every $\delta>0$ and sufficiently large $m,n$, it must hold that
\begin{align*}
mS(m)\leq 2^{n(C+\delta)}.
\end{align*}
Furthermore if $S(m)=c2^{\gamma (m)m}$ where $\gamma(m)\rightarrow 1$ for some $c\in(0,1/2]$, it must hold $m\leq n(C+\delta)$ for sufficiently large $m, n$.
\label{thm:general_converse}
\end{theorem}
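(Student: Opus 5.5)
We prove Theorem~\ref{thm:general_converse} by a counting argument that turns a BFC code into a code of a problem whose converse is already known. The second part, where $S(m)$ is close to $2^{m-1}$, will come from the same construction with a different estimate.

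\textbf{Reduction to identification.} From an $(n,m,\wcf(S),\lambda_1,\lambda_2)$ BFC code we build an identification code whose identities are the $S$-subsets $\mc A\subseteq\{0,1\}^m$. For each such $\mc A$, let $Q_{\mc A}$ be the uniform mixture $\frac1S\sum_{i\in\mc A}Q_i$. Let $D_{\mc A}$ be the decoding set of the function $f$ with $f^{-1}[1]=\mc A$. Then $Q_{\mc A}W(D_{\mc A}^c)\le\lambda_1$. For $\mc B\ne\mc A$ we only control the messages in $\mc B\setminus\mc A$, which gives
\[
Q_{\mc B}W(D_{\mc A})\le \lambda_2+\tfrac{|\mc A\cap\mc B|}{S}.
\]
Choose a constant $\epsilon>0$ with $\lambda_1+\lambda_2+\epsilon<1$. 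We select a subfamily of $S$-subsets with pairwise intersections at most $\epsilon S$. A Gilbert--Varshamov style greedy count gives
\[
N\ge \binom{2^m}{S}\Big/\sum_{k\ge\epsilon S}\binom{S}{k}\binom{2^m-S}{S-k}.
\]
When $S\le c2^{\gamma m}$ with $\gamma<1$, this yields $\log N\ge(1-o(1))\,S\log(2^m/S)\ge (1-\gamma-o(1))\,mS$. The result is an $(n,N,\lambda_1,\lambda_2+\epsilon)$ ID code.

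\textbf{Applying the ID converse.} The strong converse of identification (Han--Verd\'u / Ahlswede--Dueck soft converse) gives $\log\log N\le n(C+\delta)$ for large $n$ whenever $\lambda_1+\lambda_2+\epsilon<1$. Substituting the bound on $\log N$, we get $\log(mS)+\log(1-\gamma-o(1))\le n(C+\delta)$. The constant is absorbed by enlarging $\delta$, which gives $mS(m)\le 2^{n(C+\delta)}$. If $S=O(1)$, the binomial bound still gives $\log N\approx Sm$, so that case is covered. The main obstacle is the intersection-bounded packing estimate uniformly over the whole range of $S$. I would handle it with the hypergeometric tail: for a uniformly random $\mc B$, $|\mc A\cap\mc B|$ has mean $S^2/2^m\ll\epsilon S$, and a Chernoff bound gives probability at most $2^{-\Omega(\epsilon S\log(\epsilon 2^m/S))}$.

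\textbf{The regime $\gamma(m)\to1$.} Here the factor $\log(2^m/S)$ degenerates and the bound above only yields $m\lesssim n$ up to constants. Instead I would reduce to transmission of one of $2^m$ messages. Since $S\le 2^{m-1}$, we may use subsets $\mc A$ whose pairwise symmetric differences are large. Deciding $f(i)$ for suitably many such $f$ then amounts to a list-decoding or hypothesis-testing task with $2^m$ hypotheses. Applying the meta-converse (Fano or the Verd\'u--Han bound) to the induced code, with a reliability constant bounded away from 1, gives $m\le n(C+\delta)$. Concretely, I would take $\mc A$ to be a random half-sized subset and use a random pair of messages. The false-positive and false-negative constraints make the channel outputs of distinct messages distinguishable with constant advantage. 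Combined with a standard strong-converse argument for constant-error list codes of list size $2^{o(m)}$, this yields the rate bound $C$. I expect this step to need the most care.
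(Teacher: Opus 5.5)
Your treatment of $S\le c2^{\gamma m}$ follows the paper's argument: mixture encoders $\frac1S\sum_{i\in f^{-1}[1]}Q_i$, a subfamily of $\wcf(S)$ with pairwise preimage intersections at most $\epsilon S$, and the identification strong converse. The only change is a greedy hypergeometric count in place of the Graham--Sloane bound. Your claim $\log N\ge(1-o(1))S\log(2^m/S)$ overstates the packing. Any such family satisfies $N\binom{S}{k}\le\binom{2^m}{k}$ with $k=\lfloor\epsilon S\rfloor+1$, so $\log N\le k\log(e2^m/S)$, which is only of order $\epsilon S\log(2^m/S)$. This is harmless: any $\log N\ge\kappa mS$ with a constant $\kappa>0$ suffices after taking $\log\log$, so this part is fine.

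The gap is the claim for $\gamma(m)\to1$. Your diagnosis is off. Only $\log\log N$ enters, so a packing with $\log N\ge\kappa S$ would give $m\le n(C+\delta)$ outright, and that is how the paper proceeds (third bullet of Lemma~\ref{lemma:bcf_to_id}). The real obstruction arises when $S2^{-m}$ stays bounded below. Counting $\sum_x d(x)^2$ shows that $S$-subsets with pairwise intersections at most $\epsilon S$ number at most $(1-\epsilon)/(S2^{-m}-\epsilon)$ once $\epsilon<S2^{-m}$. For instance, at $S=2^{m-1}$ with $\lambda_1+\lambda_2\ge 1/2$ the packing route yields nothing; the paper's derivation of that bullet uses $m-\log S\to\infty$, which fails there. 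Leaving that route is therefore reasonable, but your replacement is not yet an argument.

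\begin{itemize}
\item ``A random pair of messages is distinguishable with constant advantage'' only says that $Q_bW$ and $Q_{b'}W$ are at total variation distance at least $1-\lambda_1-\lambda_2$. That is precisely the identification property, which is compatible with $m\approx 2^{nC}$, so it cannot produce a list of size $2^{o(m)}$.
\item Half-sized sets belong to $\wcf(S)$ only when $S=2^{m-1}$.
\item Fano with a constant error probability gives only $m\le (nC+1)/(1-P_e)$, not $m\le n(C+\delta)$.
\end{itemize}

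What is missing is a way to combine all the tests $D_f$ into a single score whose ``characters'' are nearly orthogonal.

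\begin{itemize}
\item Take $f$ uniform on $\wcf(S)$, $p=S2^{-m}$, $\theta=1-\lambda_1-\lambda_2$, and set $s(b,y)=\mathbb E_f[\ve 1\{y\in D_f\}(f(b)-p)]$.
\item The BFC constraints give $\mathbb E\,s(b,Y)\ge p(1-p)\theta$ for $Y\sim Q_bW$. Since also $s\le p(1-p)$, we get $s(b,Y)\ge p(1-p)\theta/2$ with probability at least $\theta/2$.
\item The functions $\phi_b: f\mapsto f(b)-p$ have Gram matrix $G=p(1-p)\bigl(\frac{2^m}{2^m-1}I-\frac{1}{2^m-1}J\bigr)$. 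The bound $\sum_b\langle g,\phi_b\rangle^2\le\lambda_{\max}(G)\|g\|^2$ gives $\sum_b s(b,y)^2\le 2p(1-p)$ for every $y$.
\item Hence at most $16\cdot 2^m/(S\theta^2)$ messages clear the threshold.
\end{itemize}

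This is a list code with list size $O(2^m/S)=2^{o(m)}$ and success probability at least $\theta/2$. The strong converse for list codes (e.g.\ via the hypothesis-testing meta-converse) then gives $\log S\le n(C+\delta)+O(1)$, hence $m\le n(C+\delta')$, for every $\lambda_1+\lambda_2<1$.
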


\begin{remark}
The converse results in the above theorem hold for $\wcf(S)$, and it follows directly that the stated results also hold for $\wcf(\leq S)$ as it is a larger class that contains $\wcf(S)$. Similarly, the achievability results in Theorem \ref{thm:general_achievability}, stated for the function class of the form $\wcf (\leq S)$, also hold for the class $\wcf(S)$, as the former include the latter.
\label{remark:converse_achievable}
\end{remark}

The results in  Theorem \ref{thm:capacity_bfc} and Proposition \ref{prop:special_S} are summarized in Table \ref{table:summary}, where in the first row we show the general results, followed by specific choices of $S(m)$. The scaling law of $m$ with respect to $n$ is shown in the last column. In the last row of the table we include the result of  Shannon's transmission problem as a reference, though noting that it is not an instance of the BFC problem.
\begin{table*}[tbh!]
\centering
\begin{tabular}{l c c c}
\toprule
Hamming weight $S(m)$  &  \makecell{Achievability \\ (lower bound on $m$) }  & \makecell{Converse \\ (lower bound on $m$)} &  Asymptotic expression  of $m$ \\
\midrule
\quad (general result) & $mS(m)^2\leq 2^{nC}$ & $mS(m)\leq 2^{nC}$ &   $\Theta(2^{n})$ or $\Theta(S^{-1}(2^{n}))$ or $\Theta(n)$ \\
\midrule
(1) \quad $c$ (constant) & $2^{Cn}$ (\cite{ahlswede_identification_1989}  for $c=1$) & $2^{Cn}$(\cite{ahlswede_identification_1989} for $c=1$)  & $\Theta(2^n)$, exponential  \\
\midrule
(2) \quad $c(\log m)^b, b>0$  & $2^{Cn}$   & $2^{Cn}$ & $\Theta(2^n)$, exponential  \\
\midrule
(3) \quad $cm^{\beta}, \beta>0$  & $2^{\frac{C}{1+2\beta}n}$ (\cite{ahlswede_general_2008}) for $c=1$) & $2^{\frac{C}{1+\beta}n}$ (\cite{ahlswede_general_2008} for $c=1$) & $\Theta(2^n)$, exponential \\
\midrule
(4) \quad $cm^{(\log m)^b}, b>0$  & $2^{(Cn/2)^{1/(1+b)}}$ & $2^{(Cn)^{1/(1+b)}}$ &$\Theta\left(2^{n^{1/(1+b)}}\right) $, sub-exponential\\
\midrule
(5) \quad $c2^{m^{1/b}}, b>1$ &  $(\frac{C}{2}n)^b$ & $(Cn)^{b}$ & $\Theta(n^{b})$, polynomial \\
\midrule
(6) \quad $c2^{m/\log m}$ & $\frac{C}{2}n\log n$  & $C n\log n$ & $\Theta(n\log n)$, quasi-linear \\
\midrule
(7) \quad $c2^{\gamma m}, \gamma\in(0,1)$ & $\max\{Cn, \frac{C}{2\gamma}n\}$ & $\frac{C}{\gamma}n$ & $\Theta(n)$, linear \\
\midrule
Shannon problem & $Cn$ & $Cn$ & $\Theta(n)$\\
\bottomrule
\end{tabular}
\caption{A summary of the results in Theorem \ref{thm:capacity_bfc} and Proposition \ref{prop:special_S}. Here $m$ denotes the length of the message and $n$ is the number of channel uses. In the last row we include the Shannon problem for reference.}
\label{table:summary}
\end{table*}


We have the following remarks:
\begin{itemize}
\item The general results in Theorem \ref{thm:capacity_bfc}, \ref{thm:general_achievability} and \ref{thm:general_converse} are new. Results related to Cases 1), 3) in Proposition \ref{prop:special_S}, and to the large $S$ regime in Theorem \ref{thm:capacity_bfc} already  appeared in the literature, as discussed below.
\item As discussed in Section \ref{sec:Ahlswede_models}, in Case (1) with the choice $c=1$, the BCF problem with $\wcf(1)$ is equivalent to the identification via channels problem studied in \cite{ahlswede_identification_1989} where the double exponential capacity result was established.  The same result holds if the Hamming weight is a constant larger than $1$.
\item For Case (3), the choice $S=m^{\beta}$ has been studied in \cite[Section 3]{ahlswede_general_2008} under the name $K$-identification (see Model 3 in Section \ref{sec:Ahlswede_models}) where the Hamming weight $S$ (denoted by $K$ in \cite{ahlswede_general_2008}) is parametrized as $2^{\kappa  n}$ (i.e. by $n$ instead of $m$). It was shown that $R=C-2\kappa$ (corresponding to $m=2^{n(C-2\kappa)}$ in our case) is achievable and a converse result of $R\leq C-\kappa$ (corresponding to $m\leq 2^{n(C-\kappa)}$ in our case) is established. Identifying $2^{\kappa n}$ with $m^{\beta}$, it can be checked that our result matches that in \cite{ahlswede_general_2008}. For example,  our achievable rate $m=2^{\frac{C}{1+2\beta}n}$ is matched with the result $2^{n(C-2\kappa)}$ in~\cite{ahlswede_general_2008} by letting $2^{\kappa n}=m^{\beta}$:
\begin{align*}
2^{n(C-2\kappa)}&=2^{nC}\cdot (2^{\kappa n})^{-2}= 2^{nC}\cdot (m^\beta)^{-2}\\
&=2^{nC}\cdot 2^{\frac{-2\beta }{1+2\beta}Cn}=2^{\frac{C}{1+2\beta}n}
\end{align*}
\item Several communication problems are discussed in \cite[Section 4]{ahlswede_general_2008} whose capacity is equal to the ordinary Shannon capacity, namely, $m\approx Cn$ asymptotically. These problems include the ranking problem in Example \ref{example:ranking} (Model 4) with the choice $S=2^m$, and the $t$-th bit identification problem in Example \ref{example:t_bit} (Model 6). It is discussed in Section \ref{sec:example_BFC} that the former problem corresponds to a BFC problem with the set $\wcf(\leq 2^m)$ and the latter problem corresponds to a BFC problem with the set $\wcf(\frac{1}{2}2^m)$. Both examples correspond to the large $S$ regime in Theorem \ref{thm:capacity_bfc}.
 \end{itemize}


\subsection{Proof of Theorem \ref{thm:capacity_bfc} and Proposition \ref{prop:special_S}}

We will prove the general achievability (Theorem \ref{thm:general_achievability}) and  converse results (Theorem \ref{thm:general_converse}) in the next section. Here we give a proof of Theorem \ref{thm:capacity_bfc} based on these two results.

\begin{proof}[Proof of Theorem \ref{thm:capacity_bfc}] \textbf{(Small $S$ regime.)} We first consider the case when $\frac{\log S(m)}{\log m}\rightarrow a$ for some $a> 0$, which means that for any $\varepsilon>0$, we have $(1+\varepsilon)a\log m\geq \log S(m)\geq (1-\varepsilon)a\log m$ for large enough $m$.    The achievability result in Theorem \ref{thm:general_achievability} states that  for every $\eta>0$ and sufficiently large $n$, there exists an $(n,m,\wcf(\leq S(m))$ code as long as $m$ satisfies $\log m+2\log S(m)\leq n(C-\eta)$ for any $\eta>0$. In this case, this implies that there exists an $(n,m,\wcf(\leq S(m))$ code if  $m$ satisfies (for large enough $n$ and $m$)
\begin{align*}
\log m+2(1+\varepsilon)a\log m\leq n(C-\eta)
\end{align*}
which is equivalent to
\begin{align*}
m\leq 2^{\frac{(C-\eta)}{1+2(1+\varepsilon)a}n}
\end{align*}
Furthermore, it is easy to see that for any desired small $\eta'>0$, we can choose $\eta$ and $\varepsilon$ sufficiently small such that  $\frac{(C-\eta)}{1+2(1+\varepsilon)a}\geq \frac{C}{1+2a}-\eta'$. Indeed, notice that $
\frac{C}{1+2a}-\frac{(C-\eta)}{1+2(1+\varepsilon)a}=\frac{\eta(1+2a)+2aC\varepsilon}{(1+2a)(1+2(1+\varepsilon)a)}$ which approaches $0$ as $\eta, \varepsilon\rightarrow 0$. This shows there exists an $(n,m,\wcf(\leq S(m),\lambda_1,\lambda_2)$ code as long as $m$ satisfies $m\leq 2^{(C/(1+2a)-\eta')n}$, meaning $C/(1+2a)$ is achievable with the \rf $L(R,n)=2^{nR}$. 

If $\frac{\log S(m)}{\log m}\rightarrow 0$,  for all $\epsilon$, we have  $\log S(m)\leq \epsilon\log m$ for $m$ large enough. Theorem \ref{thm:general_achievability} gives achievability whenever $\log m+2\log S(m)\leq (1+2\epsilon)\log m \leq n(C-\eta)$. Thus every rate below $C/(1+2\epsilon)$ is achievable, and letting $\epsilon$ approach $0$ gives $C$. Since $\wcf(S(m))$ is contained in $\wcf(\leq S(m))$, the achievability result also holds for $\wcf(S(m))$.

Now we give an upper bound on $\cbfc$ in this case. If $\frac{\log S(m)}{\log m}\rightarrow a>0$, we  have $S(m)\leq m^{a+\epsilon}$ for all $\epsilon>0$ for large enough $m$, hence $S(m)\leq 2^{\gamma m}$ for some $\gamma\in(0,1)$ and large enough $m$. Theorem \ref{thm:general_converse} states that in this case, given an $(m,n,\wcf(S(m)), \lambda_1,\lambda_2))$ code, for sufficiently large $m,n$, it must hold that $\log m+\log S(m)\leq n(C+\delta)$ for every $\delta>0$, which implies it must hold $\log m+(1-\varepsilon)a\log m\leq n(C+\delta)$ for large enough $m,n$, or equivalently
\begin{align*}
m\leq 2^{\frac{(C+\delta)}{1+(1-\varepsilon)a}n}
\end{align*}
Similarly, as in the direct case above, for any desired $\delta'>0$, we can choose $\eta$ and $\varepsilon$ sufficiently small such that $\frac{(C+\delta)}{1+(1-\varepsilon)a}\leq \frac{C}{1+a}+\delta'$. This shows that $m$ must satisfy $m\leq 2^{(C/(1+a)+\delta')n}$ for large enough $m,n$, meaning $\cbfc\leq C/(1+a)$ with the \rf $L(R,n)=2^{nR}$.  In the case when $\frac{\log S(m)}{\log m}\rightarrow 0$, as $S(m)>0$, Theorem \ref{thm:general_converse} implies $\log m\leq n(C+\delta)$, hence $\cbfc\leq C$. As mentioned in Remark \ref{remark:converse_achievable}, as $\wcf(\leq S(m))$ contains the set $\wcf(S(m))$, the above converse result also applies to any $(m,n,\wcf(\leq S(m)), \lambda_1,\lambda_2))$ code.

\textbf{(Medium $S$ regime.)} Now consider the second case when $\frac{\log S(m)}{\log m}\rightarrow \infty$  as $m\rightarrow\infty$ and $\frac{\log S(m)}{m}\leq \alpha\in (0,1)$. In this case for any $\varepsilon>0$, we have $\log m\leq  \varepsilon \log S(m)$ for large enough $m$. Theorem \ref{thm:general_achievability} implies that there exists an $(n,m,\wcf(\leq S(m))$ code if  $m$ satisfies (for large enough $n$)
\begin{align*}
\varepsilon\log S(m)+2\log S(m)\leq n(C-\eta)
\end{align*}
which is equivalent to
\begin{align*}
m\leq S^{-1}\left(2^{\frac{C-\eta}{2+\varepsilon}n}\right).
\end{align*}
For desired small $\eta'$, we can choose $\eta$ and $\varepsilon$ sufficiently small such that $(C-\eta)/(2+\varepsilon)\geq C/2-\eta'$. Furthermore, because $S^{-1}$  is strictly increasing (as $S$ is strictly increasing), we have shown that there exists an $(n,m,\wcf(\leq S(m))$ code as long as $m$ satisfies $m\leq S^{-1}(2^{(C/2-\eta')n})$, which shows $C/2$ is an achievable computation rate with the \rf $L(R,n)=S^{-1}(2^{nR})$.  Since $\wcf(S(m))$ is contained in $\wcf(\leq S(m))$, the achievability result also holds for $\wcf(S(m))$.

To give an upper bound on $\cbfc$ in this case, notice $\frac{\log S(m)}{m}\leq \alpha$ implies $S(m)\leq 2^{\alpha m}$ for some $\alpha\in(0,1)$. Theorem \ref{thm:general_converse} states that in this case, for sufficiently large $m,n$, it must hold that $\log m+\log S(m)\leq n(C+\delta)$ for every $\delta>0$, which implies it must hold $\log S(m)\leq n(C+\delta)$ for large enough $m,n$, or equivalently
\begin{align*}
m\leq S^{-1}(2^{(C+\delta)n})
\end{align*}
This shows that in this case $\cbfc\leq C$ with the \rf $L(R,n)=S^{-1}(2^{nR})$. As $\wcf(\leq S(m))$ contains the set $\wcf(S(m))$, the above converse result also applies to any $(m,n,\wcf(\leq S(m)), \lambda_1,\lambda_2))$ code. 

\textbf{(Large $S$ regime.)} In the case when $\frac{\log S(m)}{m}\rightarrow 1$ with $S(m)\leq 2^{m-1}$, we can rewrite 
 $S(m)=c2^{\gamma(m)m}$ for some $c\in (0,1/2]$ and some $\gamma(m)\rightarrow 1$ as $m\rightarrow \infty$. Indeed, we could choose $\gamma(m)=\frac{\log (S(m)/c)}{m}$ with $c=1/2$, and it is easy to check that $\gamma(m)\rightarrow 1$ in this case. For this choice of $S(m)$, Theorem \ref{thm:general_converse}  states that $\cbfc\leq C$ with the \rf $L(R,n)=nR$ for any $(m,n,\wcf(c2^{\gamma(m)m}), \lambda_1,\lambda_2)$ code with $c\in(0,1/2]$. The converse also applies to the class $\wcf(\leq c2^{\gamma(m)m})$ with $c\in(0,1/2]$. On the other hand, $C$ is also an achievable computation rate.  To achieve this rate, we could simply use the original error correction codes (described in \eqref{eq:good_channel_code}) and use a ``separation approach", namely to let the receiver recover the message $i$ reliably as in the classical transmission scheme, and then compute $f_j(i)$ at the receiver.

To cover the converse result for the remaining case: $\wcf(\leq S(m))$ where $\frac{\log S(m)}{m}\rightarrow 1$ and $2^{m-1}\leq S(m)\leq 2^m$, we notice that similarly we can rewrite $S(m)$ as $S(m)=c2^{\gamma(m)m}$ for some $c\in[1/2,1]$. Now we make a simple observation: $\wcf(\leq \frac{1}{2}2^{\gamma(m)m}) \subseteq\wcf(\leq c2^{\gamma(m)m})$ where $c\in[1/2,1]$, which implies any converse result on the former class should also apply to the latter class, and we already know that for the former case, $\cbfc\leq C$ with the \rf $nR$. This concludes the proof.
\end{proof}

Now we prove Proposition \ref{prop:special_S}. The purpose is to provide a more explicit characterization of the \rf for specific classes of $S(m)$.

\begin{proof}[Proof of Proposition \ref{prop:special_S}] 

{\bf Case 1 (constant $S$).} When $S(m)=c$ for some constant $c$,  we have $\log c/\log m\rightarrow 0$ as $m\rightarrow \infty$. Theorem \ref{thm:capacity_bfc} states that in this case $\cbfc=C$ with the \rf $L(R,n)=2^{nR}$.

{\bf Case 2 ($S=c(\log m)^b$).} When $S=c(\log m)^b$ for some $c> 0, b\geq 0$, we have $\log S(m)/\log m = \frac{\log c+b\log(\log m)}{\log m}\rightarrow 0$.  Theorem \ref{thm:capacity_bfc} states that in this case $\cbfc=C$ with the \rf $L(R,n)=2^{nR}$.

{\bf Case 3 ($S=c m^\beta$).}  When $S=cm^{\beta}$ for some constant $c,\beta>0$  not depending on $m, n$, we have 
\begin{align*}
\log S(m)/\log m = (\log c+\beta\log m)/\log m\rightarrow \beta.
\end{align*}
Theorem \ref{thm:capacity_bfc} states that in this case $\cbfc\in [\frac{C}{1+2\beta}, \frac{C}{1+\beta}]$ with the \rf $L(R,n)=2^{nR}$.

\textbf{Case 4 ($S=cm^{(\log m)^b}$ for $b>0$).}   It is easy to check that $S(m)=cm^{(\log m)^b}$ is strictly increasing for $b>0$, and we have $S^{-1}(y)=2^{(\log(y/c))^{1/(b+1)}}$. Notice that in this case it holds $\log S(m)/\log m = (\log c+ (\log m)^{b+1})/\log m\rightarrow \infty$, so Theorem \ref{thm:capacity_bfc} states that $\cbfc\in [C/2,C]$ with the \rf $S^{-1}(2^{nR})=2^{(nR-\log c)^{1/(1+b)}}$. Now define the rate function $L'(R,n):=2^{(nR)^{1/(1+b)}}$. It is easy to check that $L'(R,n)$ is a valid rate function in the sense of Definition \ref{def:valid_rate_function}. Furthermore, $L'(R,n)$ is equivalent to $S^{-1}(2^{nR})$ in the sense of Definition \ref{def:equivalent_rate_function}. Then Lemma \ref{lemma:equivalence} states that the rate $\cbfc$ is also achievable with $L'$, which proves the claimed result.

\textbf{Case 5 ($S=c2^{m^{1/b}}$ for $b>1$).}  It is easy to check that $S(m)=c2^{m^{1/b}}$ is strictly increasing for $b>1$, and we have $S^{-1}(y)=(\log (y/c))^b$. In this case $\log S(m)/\log m=(\log c+m^{1/b})/\log m\rightarrow \infty$, so Theorem \ref{thm:capacity_bfc} states that $\cbfc\in[C/2,C]$ with the \rf $S^{-1}(2^{nR})=(nR-\log c)^b$. As $L'(R,n):=(nR)^b$ is a valid rate function which is also equivalent to $S^{-1}(2^{nR})$, we have the claimed result.

{\bf Case 6 ($S=c2^{m/\log m}$)}. Since $S(m)=c2^{m/\log m}$ is strictly increasing for $m\geq 3$, we could apply Theorem \ref{thm:capacity_bfc} to this case. However, the inverse function $S^{-1}$ in this case does not have an explicit expression. We show in Appendix \ref{appendix:equivlent_rate_functions} that $S^{-1}(2^{nR})$ and is equivalent to the rate function $L(R,n):=n\log n$ and invoke Lemma \ref{lemma:equivalence} for the claim.

{\bf Case 7 ($S=c2^{\gamma m}$ for $\gamma\in(0,1)$}).  $S(m)=c2^{\gamma m}$ is strictly increasing and we have $S^{-1}(y)=\frac{1}{\gamma}\log(y/c)$. In this case $\log S(m)/\log m = (\log c+\gamma m)/\log m\rightarrow \infty$, so Theorem \ref{thm:capacity_bfc} states that $\cbfc\in[C/2,C]$ with the \rf $S^{-1}(2^{nR})=\frac{1}{\gamma}(nR-\log c)$. As $L'(R,n):=\frac{nR}{\gamma}$ is equivalent to $S^{-1}(2^{nR})$, we proved that $\cbfc\in[C/2,C]$ with the \rf $nR/\gamma$. Equivalently, the result can be stated as $\cbfc\in [C/(2\gamma), C/\gamma]$ with the \rf $L(R,n):=nR$.

On the other hand, the computation rate $C$ for the \rf $L(R,n)=Rn$ is also achievable. To achieve this rate, we use a ``separation approach", namely to let the receiver recover the message $i$ reliably as in the classical Shannon's transmission problem, and then compute $f_j(i)$ at the receiver. In this sense, the encoder/decoder construction at the beginning of this section is not needed for this result.  Overall, we have $\cbfc\in[\max\{C, \frac{C}{2\gamma}\}, C/\gamma]$ with the \rf $L(R,n)=Rn$.
\end{proof}

\section{Proof of Theorem \ref{thm:general_achievability} and  Theorem\ref{thm:general_converse}}
Now we present the main proofs in this paper, namely the proofs of Theorem \ref{thm:general_achievability} and \ref{thm:general_converse}.

\subsection{Proof of achievability}
\label{sec:proof_achievability}

In this section we prove the achievability result in Theorem \ref{thm:general_achievability}. Central to the proof is the following result, stated in~\cite[Proposition 1]{ahlswede_identification_1989}. The following proposition is a slightly modified version of \cite[Proposition 1]{ahlswede_identification_1989}. 

\begin{proposition}[Maximal code, modified from \cite{ahlswede_identification_1989}, Proposition 1]
\label{prop:maximal_code}
Let $\mc Z$ be a finite set with cardinality $|\mc Z|>6$  and let $\lambda\in(0,1/2)$ be given. Let $\epsilon<\frac{1}{6}$ and $\epsilon |\mc Z|\geq 1$. Then a family $A_1,\ldots, A_N$ of subsets of $\mc Z$ satisfying the following properties exists 
\begin{align*}
|A_i|&= M':=\lceil\epsilon |\mc Z|\rceil\geq 1, i=1,\ldots, N\\
|A_i \cap A_j|&<\lambda M', i,j=1,\ldots, N, i\neq j
\end{align*}
and
\begin{align*}
N\geq H(\lambda,M')M'^{-1}\left( \frac{1-\epsilon}{2\epsilon}\right)^{\lceil\lambda M'\rceil}
\end{align*}
where $H(\lambda,M'):={M'\choose \lceil\lambda M'\rceil}^{-1}$.  
\end{proposition}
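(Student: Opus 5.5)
We prove Proposition \ref{prop:maximal_code} with the greedy maximal-code argument of Ahlswede and Dueck. Fix $M'=\lceil\epsilon|\mc Z|\rceil$. Since $\epsilon|\mc Z|\geq 1$, we have $M'\geq 1$. Since $\epsilon<1/6$ and $|\mc Z|>6$, we also have $M'\leq \epsilon|\mc Z|+1<|\mc Z|/3$, so subsets of size $M'$ exist and there is room to spare. Choose a family $A_1,\ldots,A_N$ of $M'$-subsets of $\mc Z$ with pairwise intersections of size $<\lambda M'$, and take it maximal: no further $M'$-subset can be added without creating an intersection of size $\geq\lambda M'$ with some $A_i$. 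A maximal family exists because the collection of all such families is finite and nonempty (any single $M'$-subset qualifies).

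By maximality, every $M'$-subset $B\subseteq\mc Z$ meets some $A_i$ in at least $\ell:=\lceil\lambda M'\rceil$ elements. Note that $|A\cap B|\geq\lambda M'$ is equivalent to $|A\cap B|\geq\ell$ because intersection sizes are integers. Hence the whole family $\binom{\mc Z}{M'}$ is covered by the $N$ "balls" $\{B: |B\cap A_i|\geq \ell\}$, and
\begin{align*}
{|\mc Z|\choose M'}\leq N\cdot \#\{B: |B\cap A_1|\geq \ell\}\leq N {M'\choose \ell}{|\mc Z|-\ell\choose M'-\ell}.
\end{align*}
The second inequality comes from first choosing $\ell$ points of $A_1$ that lie in $B$, then completing $B$ arbitrarily from the remaining $|\mc Z|-\ell$ points. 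This overcounts, which is harmless for an upper bound.

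Rearranging, $N\geq H(\lambda,M')\,{|\mc Z|\choose M'}\big/{|\mc Z|-\ell\choose M'-\ell}$. The remaining step is to bound the ratio of binomials:
\begin{align*}
\frac{{|\mc Z|\choose M'}}{{|\mc Z|-\ell\choose M'-\ell}}=\prod_{k=0}^{\ell-1}\frac{|\mc Z|-k}{M'-k}.
\end{align*}
For each factor we need a lower bound of the form $(1-\epsilon)/(2\epsilon)$, and the extra factor $M'^{-1}$ provides slack. The obstacle here is the rounding $M'\leq\epsilon|\mc Z|+1$, which is why there is a factor $2$ rather than $1$. One route is to use $M'\leq 2\epsilon|\mc Z|$ (valid since $\epsilon|\mc Z|\geq1$) and $k\leq \ell-1<M'$. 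Then each factor satisfies
\begin{align*}
\frac{|\mc Z|-k}{M'-k}\geq\frac{|\mc Z|-M'}{M'}\geq \frac{|\mc Z|(1-2\epsilon)}{2\epsilon|\mc Z|},
\end{align*}
using that $(|\mc Z|-k)/(M'-k)$ is increasing in $k$ because $|\mc Z|>M'$. This gives $(1-2\epsilon)/(2\epsilon)$, which is slightly weaker than $(1-\epsilon)/(2\epsilon)$.

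To close the gap I will treat the $k=0$ factor separately. I would first try using $|\mc Z|-k\geq|\mc Z|-M'+1$, and $M'-k\leq M'$ only where needed. Alternatively, bound one factor by $|\mc Z|/M'\geq M'\cdot(\ldots)$ and let the $M'^{-1}$ prefactor absorb the loss: the single worst factor is traded for the $M'^{-1}$, while the other factors use $M'-k\leq M'-1\leq\epsilon|\mc Z|$, giving $(|\mc Z|-M')/(\epsilon|\mc Z|)\geq(1-\epsilon)/(2\epsilon)$ once $\epsilon<1/6$. The hypothesis $\epsilon<1/6$ (which gives $M'<|\mc Z|/3$) is exactly what makes these elementary inequalities go through. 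Checking this bookkeeping carefully, including the case $\ell=1$, is the main technical step; the covering argument itself is routine.
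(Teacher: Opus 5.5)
Your argument is correct. It differs from the paper's in the one step that matters, the size of the ``ball'' around $A_1$.

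\textbf{The counting step.} The paper writes this size exactly as $\sum_{i\ge\ell}\binom{|\mc Z|-M'}{M'-i}\binom{M'}{i}$ with $\ell=\lceil\lambda M'\rceil$. It then asserts that the $i=\ell$ term is the largest and bounds the sum by $M'$ times that term. Finally it relaxes $\binom{|\mc Z|-M'}{M'-\ell}\le\binom{|\mc Z|}{M'-\ell}$ to obtain $T$; the $M'^{-1}$ in the statement pays for the $M'$ summands. Your count $\binom{M'}{\ell}\binom{|\mc Z|-\ell}{M'-\ell}$ (``choose $\ell$ common points, then complete'') needs no such assertion. It is also at most $T$.

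This is a real gain. The summands are hypergeometric weights whose mode is near $M'^2/|\mc Z|\approx\epsilon M'$. So ``the first summand is maximal'' effectively needs $\epsilon\lesssim\lambda$. That holds in the paper's application, where $\epsilon=\lambda/(4e)$, but it is not implied by $\lambda<1/2$ and $\epsilon<1/6$. For example, $|\mc Z|=1000$, $\epsilon=0.15$, $\lambda=0.01$ gives $\ell=2$, and the $i=3$ term is about ten times the $i=2$ term. Your chain $\binom{M'}{\ell}\binom{|\mc Z|-\ell}{M'-\ell}\le\binom{M'}{\ell}\binom{|\mc Z|}{M'-\ell}\le T$ therefore also justifies the paper's $T$. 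Your maximal-family argument is equivalent to the paper's greedy construction.

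\textbf{The final estimate.} The obstacle you describe is self-inflicted. You noted that $(|\mc Z|-k)/(M'-k)$ is increasing in $k$. So every factor is at least the $k=0$ factor $|\mc Z|/M'$. Since $M'\le\epsilon|\mc Z|+1\le 2\epsilon|\mc Z|$, this gives $|\mc Z|/M'\ge 1/(2\epsilon)>(1-\epsilon)/(2\epsilon)$. The weaker constant $(1-2\epsilon)/(2\epsilon)$ came only from replacing $|\mc Z|/M'$ by $(|\mc Z|-M')/M'$.

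You therefore get $N\ge H(\lambda,M')\,(2\epsilon)^{-\ell}$, which is stronger than the claim. This needs neither the $M'^{-1}$ prefactor nor $\epsilon<1/6$ (only $\epsilon\le 1$, so that $M'\le|\mc Z|$), and $\ell=1$ needs no special treatment. Your first proposed fix also works and is exactly the paper's per-factor estimate: $|\mc Z|-k\ge|\mc Z|-M'+1\ge(1-\epsilon)|\mc Z|$ together with $M'-k\le M'\le 2\epsilon|\mc Z|$. The second fix, trading a factor against $M'^{-1}$, is unnecessary. In a write-up, state this one inequality rather than leaving a menu of options.
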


The above result gives a tighter lower bound on $N$ than that stated in \cite[Proposition 1]{ahlswede_identification_1989}, which is the essential difference that is needed to prove Theorem \ref{thm:general_achievability}.  The proof is almost the same as the proof of \cite[Proposition 1]{ahlswede_identification_1989}, and is included in the Appendix for completeness.

\textbf{Construction of BFC codes.} We use the fact that for a channel with the (Shannon) capacity $C>0$, for all  $\delta\in(0,1/2)$, $\xi>0$, and  all sufficiently large $n$, there exists an $n$-length transmission code $\{(\ve u_i, \mc E_i)| i=1,\ldots, M\}$, where all $\ve u_i\in\mathcal X^n$ are different, with the following property
\begin{align}
&\mc E_i\cap\mc E_j=\emptyset, i\neq j, \nonumber\\
&W(\mc E_i^c|\ve u_i)\leq \delta, \nonumber\\
&M= 2^{n(C-\xi)}.  \label{eq:good_channel_code}
\end{align}

Fix a small error probability $\delta\in(0,1/2)$,  let $\mc Z=\{\ve u_1,\ldots, \ve u_M\}$ be the set of codewords from a transmission code satisfying the properties in \eqref{eq:good_channel_code}\footnote{Here we assume $2^{n(C-\xi)}$ is an integer for simplicity of the analysis. Without this assumption, we can assume that $M$ is some integer that satisfies $2^{n(C-\xi_1)}\leq M\leq 2^{n(C-\xi_2)}$ for arbitrarily small $\xi_1$ and $\xi_2$, and the same result can be proved by modifying the proof steps accordingly.}.   Then, there exist $N$ subsets of $\mc Z$ with the property specified in Proposition \ref{prop:maximal_code}. We use $A_i, i\in \{0,1\}^m$ (indexed by binary sequences of length $m$) to denote the first $2^m$ of the $N$ subsets, where $m$ will be chosen to be an integer such that
\begin{align}
2^m\leq H(\lambda,M')M'^{-1}\left( \frac{1-\epsilon}{2\epsilon}\right)^{\lceil\lambda M'\rceil}=:\tilde N
\label{eq:m}
\end{align}
where $M'=\lceil\epsilon M\rceil$. Proposition \ref{prop:maximal_code} guarantees the existence of $\{A_i\}_{i\in\{0,1\}^m}$ if \eqref{eq:m} is satisfied. Also notice that $\epsilon$ should satisfy 
\begin{align}
\frac{1}{6}>\epsilon\geq  \frac{1}{|\mc Z|}=2^{-n(C-\xi)}
\label{eq:epsilon_const}
\end{align}
by the assumption in Proposition \ref{prop:maximal_code}. 

The encoder and the decoder of the BFC code are defined as follows.

\textit{Encoder:} Define for every $i\in\{0,1\}^m$, the stochastic encoder $Q_i$ to be the uniform distribution over $A_i$:
\begin{align}
Q_i(\ve x)=\ve 1_{\ve x\in A_i}\frac{1}{|A_i|} 
\label{eq:encoder}
\end{align}

\textit{Decoder: }The decoder set for the function $f_j$ is defined to be
\begin{align}
\mathcal D_j &:= \bigcup_{i\in f_j^{-1}[1]}\bigcup_{k:  \ve u_k\in A_i} \mc E_k\nonumber\\
&=\bigcup_{k: \ve u_k\in B_j}\mc E_k \label{eq:Dj_def}
\end{align}
where we define $B_j:=\bigcup_{\ell\in f_j^{-1}[1]}A_\ell$. 


The following lemma serves as an intermediate result to characterize the error probability of the above coding scheme.

\begin{lemma}[Error probability]
Fix $S\in\intset{2^m}$. For all $\delta,\xi\, \lambda \in(0,1/2)$ and all sufficiently large $n$, if $m$, $\epsilon$ and $\lambda$ satisfy \eqref{eq:m} and \eqref{eq:epsilon_const}, then the  $(n,m,\wcf(\leq S), \lambda_1, \lambda_2)$  BCF code  constructed above satisfies 
\begin{align*}
\lambda_1\leq \delta, \lambda_2\leq S\lambda+\delta.
\end{align*}
\label{lemma:error_probability}
\end{lemma}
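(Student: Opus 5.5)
We bound the two error types directly from the construction. The only facts needed are the disjointness of the decoding sets $\mc E_k$, the bound $W(\mc E_k^c|\ve u_k)\le\delta$ from \eqref{eq:good_channel_code}, and the pairwise intersection property $|A_i\cap A_\ell|<\lambda M'$ from Proposition \ref{prop:maximal_code}. The hypotheses \eqref{eq:m} and \eqref{eq:epsilon_const} are used only to guarantee, through Proposition \ref{prop:maximal_code}, that the $2^m$ sets $A_i$ exist with $|A_i|=M'$.

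\emph{False negatives.} Fix $j$ and a message $i$ with $f_j(i)=1$. Then $A_i\subseteq B_j$, so for every $\ve u_k\in A_i$ we have $\mc E_k\subseteq \mc D_j$, and hence $W(\mc D_j^c|\ve u_k)\le W(\mc E_k^c|\ve u_k)\le\delta$. Averaging over the uniform encoder \eqref{eq:encoder} gives $Q_iW(\mc D_j^c)=\frac{1}{M'}\sum_{\ve u_k\in A_i}W(\mc D_j^c|\ve u_k)\le\delta$. Therefore $\lambda_1\le\delta$.

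\emph{False positives.} Fix $j$ and $i$ with $f_j(i)=0$, so $i\notin f_j^{-1}[1]$. Split the codewords in $A_i$ into those in $A_i\cap B_j$ and those in $A_i\setminus B_j$.
\begin{itemize}
\item For $\ve u_k\in A_i\setminus B_j$, the set $\mc E_k$ is disjoint from every $\mc E_{k'}$ with $\ve u_{k'}\in B_j$, because all $\ve u$'s are distinct and the $\mc E$'s are pairwise disjoint. Hence $\mc D_j\subseteq\mc E_k^c$, and so $W(\mc D_j|\ve u_k)\le\delta$.
\item For $\ve u_k\in A_i\cap B_j$, we use the trivial bound $W(\mc D_j|\ve u_k)\le1$.
\end{itemize}
This gives
\begin{align*}
Q_iW(\mc D_j)\le \frac{|A_i\cap B_j|}{M'}+\delta .
\end{align*}
The main step is to bound $|A_i\cap B_j|$. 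Since $B_j=\bigcup_{\ell\in f_j^{-1}[1]}A_\ell$, the union bound gives $|A_i\cap B_j|\le\sum_{\ell\in f_j^{-1}[1]}|A_i\cap A_\ell|$. Every $\ell$ in this sum satisfies $\ell\neq i$, and there are at most $|f_j^{-1}[1]|\le S$ such terms for $f_j\in\wcf(\leq S)$. Each term is below $\lambda M'$ by Proposition \ref{prop:maximal_code}. Hence $|A_i\cap B_j|<S\lambda M'$, and therefore $Q_iW(\mc D_j)\le S\lambda+\delta$, so $\lambda_2\le S\lambda+\delta$.

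There is no real obstacle in this argument. The only subtle point is the false-positive case: a codeword of $A_i$ lying outside $B_j$ contributes at most $\delta$. This is exactly where distinctness of the codewords and disjointness of the $\mc E_k$ are needed, and these are guaranteed by \eqref{eq:good_channel_code} for all sufficiently large $n$.
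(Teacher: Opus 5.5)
Your proposal is correct and follows essentially the same route as the paper's proof. For false negatives you use $\mc E_k\subseteq\mc D_j$ when $\ve u_k\in A_i\subseteq B_j$. For false positives you split $A_i$ into $A_i\cap B_j$ (bounded trivially) and $A_i\setminus B_j$ (bounded by $\delta$ via disjointness of the $\mc E_k$), then apply the union bound $|A_i\cap B_j|\le\sum_{\ell\in f_j^{-1}[1]}|A_i\cap A_\ell|<S\lambda M'$, noting explicitly that $\ell\neq i$, which the paper leaves implicit.
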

The lemma is proved in Appendix \ref{appendix:error_probability}. It shows that the false negative error $\lambda_1$ can be made arbitrarily small as the result holds for all $\delta\in(0,1/2)$.   In the following, we prove Theorem \ref{thm:general_achievability} by analysing the false positive error $\lambda_2$.


\begin{proof}[Proof of Theorem \ref{thm:general_achievability}]
Fix an arbitrary $\eta\in(0,C)$, choose $0<\xi<\eta/2$, and for any given $\delta>0$, we  make  the following choices
\begin{align}
\lambda=\frac{\delta}{S}, \quad \epsilon=\frac{\lambda}{4e}, \quad m=L_S(C-\eta,n)  
\label{eq:choice}
\end{align}
Where $L_S$ is defined in \eqref{eq:L_S}. Now we check that with the above choices, conditions in \eqref{eq:epsilon_const} and \eqref{eq:m} are satisfied.

We start with \eqref{eq:epsilon_const}. To check the upper bound for $\epsilon$, notice that
\begin{align*}
\epsilon= \frac{\lambda}{4e}=\frac{\delta}{4Se}\leq \frac{\delta}{4e}< \frac{1}{8e}
\end{align*}
where we used the fact that $S\geq 1$ and $\delta<1/2$.  Now we check that the lower bound in \eqref{eq:epsilon_const} holds for sufficiently large $n$. With the choice of $m$ in \eqref{eq:choice}, we have $m\leq S^{-2}2^{n(C-\eta)}$ hence 
\begin{align}
S\leq 2^{n(C-\eta)/2}m^{-1/2}\leq 2^{n(C-\eta)/2}
\label{eq:S_upper_bound}
\end{align}
as $m\geq 1$. The lower bound in \eqref{eq:epsilon_const} holds for sufficiently large $n$ by substituting the choices of $\epsilon$ and by   observing 
\begin{align*}
\epsilon 2^{n(C-\xi)}=\frac{\lambda}{4e} 2^{n(C-\xi)}&=\frac{\delta}{4Se}2^{n(C-\xi)}\\
&\geq \frac{\delta}{4e}2^{n(C/2-\xi+\eta/2)}\\
&\geq \frac{\delta}{4e}2^{nC/2}
\end{align*}
where the second last step follows from the upper bound of $S$ in \eqref{eq:S_upper_bound}, and the last step holds as we choose $\eta>2\xi$.  Since $\frac{\delta}{4e}2^{nC/2}\geq 1$ for large enough $n$, the lower bound in  \eqref{eq:epsilon_const} holds for sufficiently large $n$.

Now we show the choice of $m$ also satisfies the condition in \eqref{eq:m} by lower bounding $\tilde N$.   Using the inequality  ${n\choose k}\leq (ne/k)^k$, we have
\begin{align*}
{M' \choose \lceil\lambda M'\rceil}&\leq \left(\frac{eM'}{\lceil \lambda M'\rceil} \right)^{\lceil \lambda M'\rceil}\leq \left(\frac{eM'}{ \lambda M'} \right)^{\lceil \lambda M'\rceil}=\left(\frac{e}{\lambda}\right)^{\lceil \lambda M'\rceil}
\end{align*}
So we have
\begin{align*}
\tilde N&\geq M'^{-1} \left(\frac{\lambda}{e}\right)^{\lceil \lambda M'\rceil}\left( \frac{1-\epsilon}{2\epsilon}\right)^{\lceil\lambda M'\rceil}\\
&=M'^{-1}\left(2-\frac{\lambda}{2e}\right)^{\lceil \lambda M'\rceil}
\end{align*}
where the last equality holds by substituting the choice of $\epsilon$ in \eqref{eq:choice}. Using the fact $\lambda<1/2$, $\lceil \lambda M'\rceil\geq \lambda M'$ and $\epsilon M\leq  M' \leq \epsilon M+1$, we can further lower bound $\tilde N$ as 
\begin{align*}
\tilde N\geq \frac{1}{\epsilon M+1} \left(2-\frac{1}{4e}\right)^{\lambda \epsilon M}
\end{align*}
or equivalently
\begin{align*}
\log \tilde N &\geq -\log (\epsilon M +1)+\lambda\epsilon M a_0\\
&> -\log (M +1)+\lambda\epsilon M a_0\\
&\geq -\log (2M)+\lambda\epsilon M a_0
\end{align*}
where we define $a_0:=\log (2-1/(4e))$, and use the fact $\epsilon <1 $ and $M+1\leq 2M$ in the last two inequalities. Substituting the choices of $\lambda, \epsilon$ in \eqref{eq:choice} and $M=2^{n(C-\xi)}$, the above inequality becomes
\begin{align*}
\log \tilde N\geq -1 -n(C-\xi) +  \frac{\delta^2}{4eS^2}2^{n(C-\xi)}a_0
\end{align*}
Using the first inequality in \eqref{eq:S_upper_bound} to upper bound $S$, we have
\begin{align*}
\log \tilde N&\geq -1 -n(C-\xi) +  m\frac{\delta^2}{4e}2^{n(C-\xi)-n(C-\eta)}a_0\\
&= -1 -n(C-\xi) + m2^{n(\eta-\xi)}\frac{\delta^2a_0}{4 e}
\end{align*}
Now we can show that \eqref{eq:m} holds for large enough $n$ by showing the RHS in the above expression is larger than $m$ for large enough $n$. Indeed, we have
\begin{align*}
-1 -n(C-\xi) + m2^{n(\eta-\xi)}\frac{\delta^2a_0}{4 e}-m = m\left(2^{n(\eta-\xi)}\frac{\delta^2a_0}{4e}-1 \right) -1-n(C-\xi)\rightarrow \infty
\end{align*}
as $n\rightarrow \infty$ because $\eta>\xi$.

We have shown that with the choices in \eqref{eq:choice}, the error probability of the BFC code is given by $\lambda_1=\delta$ and $\lambda_2=2\delta$ which can be made arbitrarily small. By the choice $m=L_S(C-\eta, n)$ for all $\eta>0$, we have shown that $C$ is an achievable computation rate with the \rf $L_S$ defined in \eqref{eq:L_S}.
\end{proof}

\subsection{Proof of the converse}
\label{sec:proof_converse}

For the converse results, we follow the idea briefly discussed in \cite{ahlswede_general_2008}. The strategy is to show that any $(n,m,\mc F, \lambda_1, \lambda_2)$ BFC code can be converted to a standard identification code, and the known converse result on identification code will imply a converse result for the BFC code. Recall the definition of the $(n,N,\lambda_1,\lambda_2)$ ID code in Section \ref{sec:Ahlswede_models}. Define $N^*(n,\lambda_1,\lambda_2)$ to be the maximal $N$ such that an $(n,N, \lambda_1,\lambda_2)$ ID code exists. We have the following known converse result for ID codes.

\begin{theorem}[\cite{ahlswede_identification_1989,ahlswede_strong_2002}]
Let $\lambda_1,\lambda_2>0$ such that $\lambda_1+\lambda_2<1$. Then for every $\delta>0$ and every sufficiently large $n$
\begin{align*}
N^*(n, \lambda_1,\lambda_2)\leq 2^{2^{n(C+\delta)}}
\end{align*}
where $C$ is the Shannon capacity of the channel.
\label{thm:converse_id}
\end{theorem}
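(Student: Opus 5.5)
We prove Theorem~\ref{thm:converse_id} by \emph{output-statistics approximation} (channel resolvability), the route of Han and Verd\'u. The idea is that every encoder distribution $Q_i$ on $\mathcal X^n$ can be replaced by a distribution $\tilde Q_i$ that is uniform over a multiset of only $K=2^{n(C+\delta/2)}$ codewords, with $\|Q_iW-\tilde Q_iW\|_1\le 2\gamma$. Here $\gamma>0$ is fixed so that $2\gamma<1-\lambda_1-\lambda_2$, which is possible because $\lambda_1+\lambda_2<1$. Since there are few such $K$-type distributions, and distinct identities must receive distinct approximations, this bounds $N$.

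\textbf{Step 1 (distinct approximations).} Take an $(n,N,\lambda_1,\lambda_2)$ ID code $\{(Q_i,D_i)\}$ and let $i\neq j$. Then $Q_iW(D_i)\ge 1-\lambda_1$ and $Q_jW(D_i)\le\lambda_2$, so the output distributions satisfy
\begin{align*}
\sup_{D}|Q_iW(D)-Q_jW(D)|\ge 1-\lambda_1-\lambda_2>2\gamma .
\end{align*}
If we had $\tilde Q_i=\tilde Q_j$, the triangle inequality would give a total variation between $Q_iW$ and $Q_jW$ of at most $2\gamma$, a contradiction. Hence $i\mapsto\tilde Q_i$ is injective.

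\textbf{Step 2 (counting).} $N$ is at most the number of $K$-type distributions on $\mathcal X^n$, which is at most
\begin{align*}
|\mathcal X|^{nK}=2^{n\log|\mathcal X|\,2^{n(C+\delta/2)}}\le 2^{2^{n(C+\delta)}}
\end{align*}
for large $n$. This gives the theorem, so everything reduces to the approximation lemma.

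\textbf{Step 3 (approximation lemma, the main obstacle).} We need the approximation uniformly over \emph{arbitrary}, non-i.i.d.\ $Q$, with output rate only $C+\delta/2$.

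We first try random selection: draw $X_1,\dots,X_K$ i.i.d.\ from $Q$ and set $\tilde Q$ to be their empirical distribution. The standard soft-covering bound gives
\begin{align*}
\mathbb E\|QW-\tilde QW\|_1\le 2\Pr\Bigl[\imath_{Q}(X;Y)>\log K-\tfrac{n\delta}{4}\Bigr]+2^{-n\delta/8},
\end{align*}
where $\imath_Q(x;y)=\log\frac{W(y|x)}{QW(y)}$ is the information density. The difficulty is that for general $Q$ the information density need not be concentrated below $nC$.

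To handle this we split $Q$ according to the type of $x^n$. There are polynomially many types, so write $Q=\sum_P Q(T_P)\,Q_P$, where $Q_P$ is supported on the type class $T_P$. We drop types of total mass less than $\gamma/2$, costing only $\gamma$ in total variation. We approximate each remaining $Q_P$ with $K/\mathrm{poly}(n)$ samples and mix the results with rounded weights; the rounding costs a vanishing error.

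For a constant-composition $Q_P$, we bound the information density against the reference output distribution $(PW)^{\otimes n}$ instead of $Q_PW$, using the change-of-measure inequality on the normalizing denominator. For $x^n\in T_P$, the quantity $\log\frac{W(y^n|x^n)}{(PW)^n(y^n)}$ concentrates, by Chebyshev's inequality, around $nI(P,W)\le nC$. Choosing $\log K=n(C+\delta/2)$ then makes the probability term vanish. This is essentially the type-based argument of Ahlswede and Dueck; the strong-converse refinement of \cite{ahlswede_strong_2002} is needed only to permit every pair with $\lambda_1+\lambda_2<1$, and the resolvability route above already covers that case.
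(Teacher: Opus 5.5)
Your proposal is correct, but it cannot follow the paper's proof, because the paper gives none: the theorem is imported as a black box from \cite{ahlswede_identification_1989,ahlswede_strong_2002}.

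What you sketch is a self-contained version of the Han--Verd\'u output-statistics (resolvability) converse. It has three parts: distinct identities must have output distributions at total variation at least $1-\lambda_1-\lambda_2$; every $QW$ is within $\gamma$ of some $\tilde QW$ with $\tilde Q$ uniform on $K=2^{n(C+\delta/2)}$ sequences; and there are at most $|\mathcal X|^{nK}$ such $\tilde Q$. The citation buys brevity. Your route buys transparency: $\lambda_1+\lambda_2<1$ enters only through the choice $2\gamma<1-\lambda_1-\lambda_2$, which is exactly why it gives the strong form. For the same reason your closing attribution is a little off. This is Han and Verd\'u's route, whereas Ahlswede and Dueck's original converse was only a soft one.

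Your argument also exposes assumptions the paper leaves implicit. You need a memoryless channel with finite alphabets: finite $\mathcal X$ for the counting step, and, for the Chebyshev step, a bound on $\sum_x P(x)\,\mathbb{E}_{W(\cdot|x)}\bigl[\log^2\frac{W(Y|x)}{PW(Y)}\bigr]$ that is uniform in $P$. That uniformity is what keeps $n_0$ independent of $Q$. This bound is standard for finite alphabets, but you should state it explicitly.

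Finally, the type decomposition with dropping and rounding is unnecessary. Let $\mathcal P_n$ be the set of types and $P_{x^n}$ the type of $x^n$, and use $R=\frac{1}{|\mathcal P_n|}\sum_{P\in\mathcal P_n}(PW)^{\otimes n}$ as the reference measure in your change-of-measure step. This gives $\log\frac{W^n(y^n|x^n)}{R(y^n)}\le\log\frac{W^n(y^n|x^n)}{(P_{x^n}W)^{\otimes n}(y^n)}+\log|\mathcal P_n|$ for every $x^n$. Chebyshev then controls $\imath_Q$ for arbitrary $Q$ directly, and a single application of soft covering with $K$ samples drawn from $Q$ suffices.
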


To prove the result in Theorem \ref{thm:general_converse}, we in fact show that there is a subset $\mc F\subset \wcf(S)$ such that any BFC code for the set $\mc F$ satisfies the statement in Theorem \ref{thm:general_converse}. Specifically, we prove the following result.

\begin{theorem}[Converse for a subset]
Assume the channel $W(\cdot | x^n)$ has a Shannon capacity $C>0$. Then there exists a set of constant weight functions $\mc F\subset \wcf(S)$, such that the computation capacity $\cbfc$ for $\mc F$ satisfies the statements in Theorem \ref{thm:general_converse}.
\label{thm:converse_subset}
\end{theorem}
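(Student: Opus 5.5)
The plan is to turn any BFC code for a suitably chosen subfamily $\mc F\subset\wcf(S)$ into an ordinary identification code, and then apply Theorem~\ref{thm:converse_id}. The identification messages will be the functions in $\mc F$ themselves.

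\textbf{Step 1: from a BFC code to an ID code.} Start from an $(n,m,\mc F,\lambda_1,\lambda_2)$ BFC code $\{Q_i\},\{D_f\}$. For each $f\in\mc F$ define the mixed encoder $\bar Q_f:=\frac1S\sum_{i\in f^{-1}[1]}Q_i$, and keep $D_f$ as the decoding set.
\begin{itemize}
\item Averaging the false-negative constraint over $i\in f^{-1}[1]$ gives $\bar Q_fW(D_f^c)\le\lambda_1$.
\item For $g\neq f$, split the average defining $\bar Q_fW(D_g)$ according to whether $g(i)=1$ or $g(i)=0$. The terms with $g(i)=1$ contribute at most $\frac{|f^{-1}[1]\cap g^{-1}[1]|}{S}$. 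The terms with $g(i)=0$ each obey the false-positive bound, so they contribute at most $\lambda_2$.
\item Hence, if every pair $f\neq g$ in $\mc F$ satisfies $|f^{-1}[1]\cap g^{-1}[1]|<\mu S$, we get an $(n,|\mc F|,\lambda_1,\lambda_2+\mu)$ ID code.
\end{itemize}
Choose $\mu>0$ small enough that $\lambda_1+\lambda_2+\mu<1$. Theorem~\ref{thm:converse_id} then gives $\log\log|\mc F|\le n(C+\delta)$ for large $n$.

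\textbf{Step 2: a large family with small pairwise intersections.} Take $\mc F$ to be the functions whose preimages $f^{-1}[1]$ are the sets $A_1,\dots,A_N$ given by Proposition~\ref{prop:maximal_code}. Apply it with $\mc Z=\{0,1\}^m$, with $\epsilon=S/2^m$ so that $M'=S$, and with $\lambda=\mu$. In the regime $S\le c2^{\gamma m}$ with $\gamma<1$, we have $\epsilon\le c2^{-(1-\gamma)m}$, which is below $1/6$ for large $m$. Bounding $H(\mu,S)\ge(e/\mu)^{-\lceil\mu S\rceil}$ as in the achievability proof gives
\begin{align*}
\log N\ \ge\ \lceil\mu S\rceil\Bigl((1-\gamma)m-\log(2c)-\log\tfrac{e}{\mu}-o(1)\Bigr)-\log S .
\end{align*}
This is at least $\mu' S m$ for some constant $\mu'>0$ and large $m$. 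Therefore $\log\log N\ge\log S+\log m-O(1)$. Combined with Step 1, this yields $\log m+\log S\le n(C+\delta')$, i.e.\ $mS(m)\le 2^{n(C+\delta)}$ after absorbing the constant into $\delta$.

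\textbf{Step 3 (the expected obstacle): the large regime $S=c2^{\gamma(m)m}$ with $\gamma(m)\to1$.} Here $\epsilon=c2^{(\gamma(m)-1)m}$ need not tend to $0$. It can be as large as $1/2$, so Proposition~\ref{prop:maximal_code} no longer applies directly, and random $S$-subsets overlap in a constant fraction of their elements.

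The first thing I would try is to restrict attention to a sub-universe. Fix a small $\theta>0$. Take a subcube of $2^{m'}$ messages with $m'=m-\lceil\theta m\rceil$. Fill part of each preimage with a fixed common block of messages outside the subcube; this block is padded onto every function so that all weights are exactly $S$. Then use Proposition~\ref{prop:maximal_code} inside the subcube with sets of size $S'=2^{m'-k}$ for a constant $k$, so that $\epsilon$ is a small constant.

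The common block adds the same overlap to every pair of functions, so the intersection bound of Step 1 must be redone with this block removed. Concretely, the fixed common part should be handled by a modified mixed encoder that averages only over the varying part of each preimage. With that modification, the same chain of inequalities should give $\log\log N\gtrsim m'=(1-\theta)m$, hence $m\le n(C+\delta)/(1-\theta)$; letting $\theta\to0$ gives $m\le n(C+\delta)$.

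Checking that this padding keeps the false-positive bound intact is the step I expect to require the most care. If it fails, the fallback is to reduce instead to the $t$-th-bit problem (Model~6), whose capacity is $C$ with the rate function $nR$.
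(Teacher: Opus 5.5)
Your proof is correct, and it differs from the paper's in the combinatorial ingredient.

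\textbf{Small and medium regimes.} Step 1 is exactly the paper's reduction. For $S\le c2^{\gamma m}$ the paper builds the low-overlap family from a Gilbert bound for constant-weight codes (Lemma \ref{lemma:gilbert_cw}, via Lemma \ref{lemma:bcf_to_id}). You instead reuse the greedy packing of Proposition \ref{prop:maximal_code} with $\epsilon=S/2^m$. Both give $\log|\mc F|=\Omega(Sm)$, so in this regime the difference is cosmetic; yours just avoids importing the Graham--Sloane bound.

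\textbf{Large regime.} Here the difference is real. The paper applies the third bullet of Lemma \ref{lemma:bcf_to_id}. Its derivation uses $K_\alpha\le\frac{\alpha}{2}(m-\log S)$, justified by ``$m-\log S\rightarrow\infty$''. That justification fails when $m-\log S$ stays bounded, e.g.\ $S=2^{m-1}$, which the paper explicitly allows via $\gamma(m)=1$. In that case the bullet is actually false for $\alpha<1/2$. Double counting shows that $N$ subsets of size $2^{m-1}$ with pairwise intersections at most $\alpha 2^{m-1}$ satisfy $N\le\frac{2-2\alpha}{1-2\alpha}$. So once $\lambda_1+\lambda_2\ge 1/2$, no admissible $\alpha<1-\lambda_1-\lambda_2$ gives a large family. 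Your sub-universe padding avoids this, and your route covers every $\lambda_1+\lambda_2<1$ in this regime.

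\textbf{The step you flagged goes through.} Take $f^{-1}[1]=B\cup A_f$ with $B$ disjoint from the sub-universe, and $\bar Q_g:=\frac{1}{S'}\sum_{i\in A_g}Q_i$.
\begin{itemize}
\item The missed-identification error is at most $\lambda_1$, since $A_g\subseteq g^{-1}[1]$.
\item Every $i\in A_g\setminus A_f$ lies in the sub-universe but outside $A_f$, hence outside $B\cup A_f$. So $f(i)=0$ and $Q_iW(D_f)\le\lambda_2$, which gives $\bar Q_gW(D_f)<\mu+\lambda_2$.
\end{itemize}
The common block never enters the mixed encoders, so its overlap is harmless.

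To finish you need two checks. First, fix $k$ large enough that $k-1+\log(1-2^{-k})>\log(e/\mu)$. Then Proposition \ref{prop:maximal_code} yields $\log N=\Omega(2^{m'-k})$, hence $\log\log N\ge(1-\theta)m-O(1)$. Second, you need $S\ge 2^{m'-k}$, which holds for large $m$ because $(1-\gamma(m))m=o(m)$.

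You can drop the fallback. It would not have sufficed anyway: the $t$-th bit functions lie only in $\wcf(2^{m-1})$, and the Model 6 capacity result is a vanishing-error statement, not a bound valid for every $\lambda_1+\lambda_2<1$.
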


Since  $\mc F$ is contained in $\wcf(S)$, any converse result on the code $(n,m,\mc F,\lambda_1,\lambda_2)$ automatically implies a converse result on the code  $(n,m,\wcf(S),\lambda_1,\lambda_2)$, which proves Theorem \ref{thm:general_converse}. The proof of Theorem \ref{thm:converse_subset} relies on the following Lemma \ref{lemma:bcf_to_id}, which is proved using Lemma \ref{lemma:gilbert_cw} (Gilbert's bound for constant weight sequences) in Appendix. Lemma \ref{lemma:bcf_to_id} states that we can find a subset $\mc F\subset\wcf(S)$ which has essentially the same size as $\wcf(S)$, but whose elements have small overlaps in their pre-images. The latter property allows us to convert a BFC code into a good identification code.

\begin{lemma}
For any integer $m\geq 1$, every integer $S\in[1, 2^{m-1}]$ and every $\alpha\in(0,1)$, there exists a set of constant weight functions $\mc F\subset \wcf(S)$ satisfying the following properties
\begin{itemize}
\item For any distinct $f_i, f_j\in \mc F$, it holds that $|f_i^{-1}[1]\cap f_j^{-1}[1]|\leq \alpha S$
\item If  $S\leq c2^{\gamma  m}$ for some $c>0$ and $\gamma \in[0,1)$,  $\log |\mc F|\geq \frac{\alpha(1-\gamma)}{4}Sm$ for $m$ large enough.
\item If $S=c2^{\gamma(m)m}$ with  $\gamma(m)\rightarrow 1$ for some $c\in(0,1/2]$, $\log |\mc F|\geq \frac{\alpha S}{2}(\log(1/c)+(1-\gamma(m))m)$ for $m$ large enough.
\end{itemize}
\label{lemma:bcf_to_id}
\end{lemma}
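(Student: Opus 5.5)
The plan is a greedy (Gilbert-type) selection argument on the family of $S$-subsets of $\{0,1\}^m$. I identify each $f\in\wcf(S)$ with its support $f^{-1}[1]$, an $S$-subset of a ground set of size $K:=2^m$. Two functions are in \emph{conflict} if their supports intersect in more than $\alpha S$ elements.

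The selection itself is standard. Pick functions one at a time. After each pick, delete every function in conflict with it, and repeat until $\wcf(S)$ is exhausted. The resulting family $\mc F$ satisfies the first property by construction. It also satisfies
\begin{align*}
|\mc F|\geq \frac{\binom{K}{S}}{V},\qquad V:=\sum_{t>\alpha S}\binom{S}{t}\binom{K-S}{S-t},
\end{align*}
where $V$ counts the $S$-subsets meeting a fixed $S$-subset in more than $\alpha S$ points. This is exactly the content of Lemma \ref{lemma:gilbert_cw} (Gilbert's bound for constant weight sequences), which I will invoke.

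The core estimate is $\log(\binom{K}{S}/V)$. The ratio $V/\binom{K}{S}$ is precisely $\Pr[T>\alpha S]$, where $T$ is hypergeometric: the number of elements of a fixed $S$-set that fall in a uniformly random $S$-subset of a $K$-set. Its mean is $Sp$ with $p:=S/K$. By Hoeffding's result that hypergeometric tails are dominated by binomial ones, together with the Chernoff bound,
\begin{align*}
\Pr[T\geq \alpha S]\leq 2^{-S\,D(\alpha\|p)},\qquad D(\alpha\|p)\geq \alpha\log\frac1p-h(\alpha),
\end{align*}
valid whenever $p<\alpha$. Here $h$ is the binary entropy and $D$ the binary divergence. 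This yields
\begin{align*}
\log|\mc F|\geq S\bigl(\alpha\log(1/p)-h(\alpha)\bigr).
\end{align*}
A cruder route gives the same order: bound $\binom{S}{t}\leq 2^S$ and use $\binom{K}{S}/\binom{K}{S-t}\geq ((K-S)/S)^t$.

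The two quantitative bullets then follow by specialising $p$.

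\textbf{Second bullet.} If $S\leq c2^{\gamma m}$ with $\gamma<1$, then $\log(1/p)\geq (1-\gamma)m-\log c$, which grows linearly in $m$. Hence $\alpha S\log(1/p)-Sh(\alpha)\geq \frac{\alpha(1-\gamma)}{4}Sm$ for $m$ large, with room to spare; the constant $1/4$ absorbs $\log c$, $h(\alpha)$ and the condition $p<\alpha$.

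\textbf{Third bullet.} If $S=c2^{\gamma(m)m}$, then $\log(1/p)=\log(1/c)+(1-\gamma(m))m$ exactly. The bound becomes $S\bigl(\alpha\log(1/p)-h(\alpha)\bigr)$, and I need this to be at least $\frac{\alpha S}{2}\log(1/p)$.

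I expect this third bullet to be the main obstacle. When $(1-\gamma(m))m\to\infty$ the argument is identical to the second bullet. When $(1-\gamma(m))m$ stays bounded, so that $p$ stays near the constant $c$, the $-h(\alpha)$ term is not negligible and the crude bound can fail for small $\alpha$. There I would not throw away the second term of the divergence. Instead I would use the exact value $D(\alpha\|p)$, and if necessary restrict to $\alpha$ close enough to $1$ relative to $c$, which is harmless for the converse since $\alpha$ is a free parameter there. I would also treat the requirement $p<\alpha$ (for instance $c=1/2$, $\gamma\equiv 1$, so $p=1/2$) as a case to check explicitly. Before committing, I would verify exactly which $(\alpha,c)$ regime the subsequent converse proof needs, and state the third bullet in that regime.
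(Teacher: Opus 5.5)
For the first two bullets your argument is correct and essentially the paper's. The paper also applies the Gilbert bound for constant-weight codes (Lemma~\ref{lemma:gilbert_cw}) to the length-$2^m$ indicator vectors and bounds the ball via $\binom{n}{k}\le (en/k)^k$. This gives $\log|\mc F|\ge S\bigl(\alpha\log(1/p)-K_\alpha\bigr)$ with $p=S/2^m$ and $K_\alpha=1+(1-\alpha)\log\frac{e}{1-\alpha}$. Your Hoeffding--Chernoff estimate gives the sharper $\log|\mc F|\ge S\,D(\alpha\|p)\ge S\bigl(\alpha\log(1/p)-h(\alpha)\bigr)$. When $\log(1/p)\to\infty$ the constant is immaterial and the two proofs coincide.

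For the third bullet your suspicion is justified, and the defect lies in the statement, not in your method. The paper obtains this bullet from $K_\alpha\le\frac{\alpha}{2}(m-\log S)$, justified by ``$m-\log S\to\infty$''. But $m-\log S=\log(1/c)+(1-\gamma(m))m$ stays bounded whenever $(1-\gamma(m))m=O(1)$. An example is $c=1/2$, $\gamma(m)\equiv 1$, $S=2^{m-1}$, a case the proof of Theorem~\ref{thm:converse_subset} uses explicitly; there the paper's bound is vacuous for every $\alpha$, since $\alpha-K_\alpha<0$. In this regime the bullet is false for small $\alpha$:
\begin{itemize}
\item Take $\gamma(m)\equiv 1$, so $S=c2^m$ (with $c2^m$ an integer, e.g.\ $c=1/2$), and $\alpha<c$.
\item The vectors $w_f:=\mathbf{1}_{f^{-1}[1]}-c\mathbf{1}\in\mathbb{R}^{2^m}$ satisfy $\|w_f\|^2=(1-c)S$ and $\langle w_f,w_g\rangle=|f^{-1}[1]\cap g^{-1}[1]|-cS\le-(c-\alpha)S$ for distinct $f,g\in\mc F$.
\item Hence $0\le\|\sum_{f\in\mc F}w_f\|^2$ forces $|\mc F|\le 1+\frac{1-c}{c-\alpha}$, whereas the bullet demands $\log|\mc F|\ge\frac{\alpha S}{2}\log(1/c)\to\infty$.
\end{itemize}
So no proof for all $\alpha\in(0,1)$ exists, and the right replacement is your exact-divergence bound. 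Since $S\le 2^{m-1}$ gives $p\le 1/2$, for $\alpha>1/2$ it yields $\log|\mc F|\ge S\,D(\alpha\|\tfrac12)=S\bigl(1-h(\alpha)\bigr)$. A bound $\log|\mc F|\ge\kappa S$ with fixed $\kappa>0$ is all the converse needs to conclude $m\le n(C+\delta')$.

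The one claim in your plan that is wrong is that restricting $\alpha$ is harmless because $\alpha$ is a free parameter in the converse. It is not free. In the proof of Theorem~\ref{thm:converse_subset} the derived identification code has errors $(\lambda_1,\alpha+\lambda_2)$, and Theorem~\ref{thm:converse_id} applies only if $\alpha<1-\lambda_1-\lambda_2$. Forcing $\alpha>1/2$ (or at least $\alpha>c$, which the counterexample shows is necessary when $S=c2^m$) therefore yields the large-$S$ converse only for $\lambda_1+\lambda_2<1-\alpha$. That still suffices for the capacity bound $\cbfc\le C$ in Theorem~\ref{thm:capacity_bfc}, because an achievable rate must work for every $\lambda_1,\lambda_2>0$. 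It does not recover the ``every $\lambda_1+\lambda_2<1$'' form of the second part of Theorem~\ref{thm:general_converse}. Moreover, for $S=c2^m$ with $\lambda_1+\lambda_2\ge 1-c$, no subfamily of $\wcf(S)$ can rescue that form through this averaging-to-identification reduction.
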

\begin{proof}
Any function $f\in\mc F_m$ can be represented by  a  binary sequence $\ve f$ of length $2^m$, where the $i$-th position of $\ve f$ is $1$ if and only if $f(i')=1$. Here $ i'$ is the dyadic representation of $i$, i.e., the unique binary sequence of length $m$ such that $i=\sum_{j=0}^{m-1}2^j i'_{j+1}$.  In other words, $\ve f$ is defined via $\ve f_i:=f(i')$. It is clear that if $f$ has a Hamming weight $S$, the corresponding vector $\ve f$ also has Hamming weight $S$. So we can apply Lemma \ref{lemma:gilbert_cw} by identifying $T=2^m$ and $G=\mc F$, which establishes
\begin{align*}
\log |\mc F|\geq \log {2^m\choose S}-\log {2^m\choose \lceil (1-\alpha)S\rceil-1 }-S
\end{align*}
for all $S\leq \frac{1}{2}2^m$. To further simplify the lower bound, we use the standard binomial bounds $(n/k)^k\leq {n\choose k}\leq (en/k)^k$ to obtain
\begin{align*}
\log |\mc F|&\geq S\log (2^m/S)-(\lceil (1-\alpha)S\rceil-1 )\log \left(\frac{e2^m}{\lceil (1-\alpha)S\rceil-1 } \right)-S\\
&\geq S(m-\log S)- (1-\alpha)S (m+\log e-\log ((1-\alpha)S) )-S
\end{align*}
where the last inequality holds because $x\log \frac{2^m}{x}$ is increasing for $x<2^m$, and   $\lceil (1-\alpha)S\rceil-1 \leq (1-\alpha)S$. Simplifying the expression we have
\begin{align*}
\log |\mc F|\geq S(\alpha(m-\log S) -K_\alpha)
\end{align*}
where $K_\alpha:=1+(1-\alpha)\log\frac{e}{1-\alpha}$. Notice for $m$ large enough, we have $K_\alpha\leq \frac{\alpha}{2}(m-\log S)$ as $m-\log S\rightarrow \infty$ as $m\rightarrow \infty$, this shows
\begin{align}
\log |\mc F|\geq \frac{\alpha}{2}S(m-\log S)
\label{eq:converse_2nd_last}
\end{align}
for large enough $m$.  

If $S\leq c2^{\gamma m}$ for some $\gamma\in [0,1)$, we have $m-\log S\geq (1-\gamma)m-\log c$. Since we have $\log c\leq (1-\gamma)m/2$ for large enough $m$, \eqref{eq:converse_2nd_last} implies that
\begin{align*}
\log |\mc F|\geq \frac{\alpha(1-\gamma)}{4}Sm
\end{align*}
for large enough $m$.

If $S=c 2^{\gamma(m)m}$ with $\gamma(m)\rightarrow 1$  for some $c\in(0,1/2]$, \eqref{eq:converse_2nd_last} implies that
\begin{align*}
\log |\mc F|\geq \frac{\alpha}{2} S (m-\log c-\gamma(m)m)=\frac{\alpha S}{2}(\log(1/c)+(1-\gamma(m))m)
\end{align*}
for large enough $m$.
\end{proof}

\begin{proof}[Proof of Theorem \ref{thm:converse_subset}]
We consider a set of functions $\mc F$ with properties specified  by Lemma \ref{lemma:bcf_to_id}, and consider  an  $(n,m, \mc F,\lambda_1,\lambda_2)$ BFC code characterized by $(Q_i)_{i\in\{0,1\}^m}, (D_j)_{j\in \intset{|\mc F|}}$  for this class of functions.

We first show that we can convert the above BFC code to an identification code as follows. Specifically, the new identification code is defined to be the encoder/decoder pairs $\{(\tilde Q_j, D_j), j\in\intset{|\mc F|}\}$ where encoder $\tilde Q_j$ is defined as
\begin{align*}
\tilde Q_j(x^n):=\frac{1}{S}\sum_{i\in f_j^{-1}[1]}Q_i(x^n)
\end{align*}
Now we characterize the error probability of this constructed identification code. The mis-identification error is 
\begin{align*}
\sum_{\ve x} \tilde Q_j(\ve x)W(D_j^c|\ve x)&=\sum_{\ve x}\frac{1}{S}\sum_{i\in f_j^{-1}[1]}Q_i(\ve x)W(D_j^c|\ve x)\\
&=\frac{1}{S}\sum_{i\in f_j^{-1}[1]}\sum_{\ve x}Q_i(\ve x)W(D_j^c|\ve x)\\
&\leq \frac{1}{S}\sum_{i\in f_j^{-1}[1]} \lambda_1\\
&=\lambda_1
\end{align*}
where the inequality holds because $f_j(i)=1$ hence $\sum_{\ve x}Q_i(\ve x)W(D_j^c|\ve x)\leq\lambda_1$, by the property of the BFC code. The last equality holds  because $|f_{j}^{-1}[1]|=S$ as $f_j$ has Hamming weight $S$.

The wrong-identification error is, for $\ell\neq j$, 
\begin{align*}
\sum_{\ve x} \tilde Q_\ell(\ve x)W(D_j|\ve x)&=\sum_{\ve x}\frac{1}{S}\sum_{i\in f_\ell^{-1}[1]}Q_i(\ve x)W(D_j|\ve x)\\
&=\frac{1}{S}\left(\sum_{i\in f_\ell^{-1}[1]\cap f_j^{-1}[1]}\sum_{\ve x}Q_i(\ve x)W(D_j|\ve x)+\sum_{i\in f_\ell^{-1}[1]\backslash f_j^{-1}[1]}\sum_{\ve x}Q_i(\ve x)W(D_j|\ve x)\right)\\
&\stackrel{(a)}{\leq} \frac{1}{S}\left(|f_\ell^{-1}[1]\cap f_j^{-1}[1]|+  \sum_{i\in f_\ell^{-1}[1]\backslash f_j^{-1}[1]}\lambda_2\right)\\
&\stackrel{(b)}{\leq} \frac{|f_\ell^{-1}[1]\cap f_j^{-1}[1]|}{S}+\lambda_2\\
&\stackrel{(c)}{\leq} \alpha+\lambda_2
\end{align*}
where in (a) we use the upper bound $\sum_{\ve x}Q_i(\ve x)W(D_j|\ve x)\leq 1$ and the BFC code error property, step (b) follows $|f_\ell^{-1}[1]\backslash f_j^{-1}[1]|\leq S$, and step (c) follows from the first property stated in Lemma \ref{lemma:bcf_to_id}. 

Therefore, we have shown that  an $(n,m, \mc F, \lambda_1,\lambda_2)$ BFC code (with properties of $\mc F$ prescribed in Lemma \ref{lemma:bcf_to_id}) implies the existence of an $(n, |\mc F|, \lambda_1, \alpha+\lambda_2)$ ID code for all $\alpha\in(0,1)$. Now fix some $\alpha\in(0,1-\lambda_1-\lambda_2)$.  Theorem \ref{thm:converse_id} states that for any $\lambda_1,\lambda_2$ such that $\lambda_1+\lambda_2<1$,  it holds that
\begin{align}
\log |\mc F|\leq 2^{n(C+\delta)}
\label{eq:converse_F}
\end{align}
for every $\delta>0$ and every sufficiently large $n$. 

Now we combine the inequality \eqref{eq:converse_F} with the result in Lemma \ref{lemma:bcf_to_id} under the condition  $S\leq c2^{\gamma m}$, which states
\begin{align*}
 \frac{\alpha(1-\gamma)}{4}Sm\leq 2^{n(C+\delta)}
\end{align*}
which is equivalent to 
\begin{align*}
Sm\leq \frac{4}{\alpha(1-\gamma)}2^{n(C+\delta)}\leq 2^{n(C+\delta')}
\end{align*}
for all $\delta'>\delta>0$ and  sufficiently large $n$. 

Now consider the case  $S=c2^{\gamma(m)m}$ with $\gamma(m)=1$ or $\gamma(m)\rightarrow 1$ where $c\in(0,1/2]$. Combining the inequality \eqref{eq:converse_F} with the result in Lemma \ref{lemma:bcf_to_id} under this case which states
\begin{align*}
\frac{\alpha S}{2}(\log(1/c)+(1-\gamma(m))m)\leq 2^{n(C+\delta)}
\end{align*}
or equivalently
\begin{align*}
\log\frac{\alpha}{2} +\log c + \gamma(m)m + \log(\log(1/c)+(1-\gamma(m))m) \leq n(C+\delta)
\end{align*}
As $\gamma(m)\rightarrow 1$, for any $\epsilon>0$, we can find $m$ large enough such that $\gamma(m)\geq (1-\epsilon)$. Then the above inequality implies that
\begin{align*}
\log\frac{\alpha}{2} +\log c +(1-\epsilon)m+\log (\log (1/c))\leq n(C+\delta)
\end{align*}
or equivalently
\begin{align*}
m\leq \frac{n(C+\delta)-\log(\log(1/c)c\alpha/2)}{(1-\epsilon)}
\end{align*}
It is straightforward to see that for any $\delta'>0$, we can choose appropriate $\epsilon$ and $\delta$  such that
\begin{align*}
m\leq n(C+\delta')
\end{align*}
for all sufficiently large $n$. This proves the claimed result. 
\end{proof}

\appendix

\subsection{Proof of Lemma \ref{lemma:equivalence}}\label{appendix:proof_equivalence}

To prove the achievability, we assume without loss of generality that $R$ is an achievable computation rate with rate function $L_1$. This means that for any $\eta>0$,   there exists an $(n,m,\mc F, \lambda_1,\lambda_2)$ BFC code where $m$ satisfies $\log m\geq \log L_1(R-\eta, n)$. Now we show that $R$ is also achievable with rate function $L_2$.  Since $L_2$ is a valid rate function, Definition \ref{def:valid_rate_function} implies that for every $R_1>R_2>0$, there exist constants
$\kappa>1$ and $n_0\in\mathbb{N}$ such that for all $n\geq n_0$, it holds that $   L_2(R_1,n)    \geq    \kappa L_2(R_2,n)$,
or equivalent 
\begin{align}
\log L_2(R_1,n)\geq \log\kappa +\log L_2(R_2, n)
\label{eq:multiplicative-separation}
\end{align}
Furthermore, since $L_1$ and $L_2$ are equivalent rate functions, Definition \ref{def:equivalent_rate_function} implies that for all $\epsilon>1$ and large enough $n$, we have $\frac{L_2(R,n)}{L_1(R,n)}\leq \epsilon$.  Now choosing $n$ large enough such that $\epsilon\leq \kappa$, we have
\begin{align*}
\log m\geq \log L_1(R-\eta, n)&=\log L_2(R-\eta,n) -\log \frac{L_2(R-\eta, n)}{L_1(R-\eta,n)}\\
&\stackrel{(a)}{\geq} \log L_2(R-2\eta, n)+\log \kappa  -\log \frac{L_2(R-\eta, n)}{L_1(R-\eta,n)}\\
&\geq \log L_2(R-2\eta, n)+\log \kappa  -\log \epsilon\\
&\geq \log L_2(R-2\eta, n)
\end{align*}
where step $(a)$ follows from \eqref{eq:multiplicative-separation}.

For the converse part, assume that for large enough $m$ and $n$,  there exists an $(n,m,\mc F, \lambda_1,\lambda_2)$ BFC code,  $m$ must satisfy $\log m\leq \log L_1(R+\delta, n)$. Since $L_1$ and $L_2$ are equivalent rate functions,  for all $\epsilon>1$ and large enough $n$, it holds that $\frac{L_1(R,n)}{L_2(R,n)}\leq \epsilon$.  Then for large enough $n$ we have
\begin{align*}
\log m&\leq \log L_2(R+\delta, n) + \log \frac{L_1(R+\delta,n)}{L_2(R+\delta,n)}\\
&\stackrel{(a)}{\leq} \log L_2(R+\delta/2,n)-\log \kappa + \log \epsilon\\
&\leq \log L_2(R+\delta/2,n)
\end{align*}
where step $(a)$ follows from \eqref{eq:multiplicative-separation} and in the last step we choose $n$ large enough such that $\epsilon\leq \kappa$.

\subsection{The proof of equivalent rate functions for Case 6 in Proposition \ref{prop:special_S}}
\label{appendix:equivlent_rate_functions}

Let $S(m):= c2^{m/\log m}$ for some $c>0$. Define $    L_1(R,n):=S^{-1}\left(2^{Rn}\right)$ and $L_2(R,n)=Rn\log n$. We prove that the two rate functions are equivalent. Define $f(x):=\frac{x}{\log x}$ so $S(m)=2^{f(m)}$.  We have $f'(x)=\frac{\log x-\log e}{(\log x)^2}$, so $f(x)$ is strictly increasing for  $x>e$, therefore has a strictly increasing inverse for large $x$ (i.e. $x>e$).

We first derive upper and lower bounds on $f^{-1}$. We have
\begin{align*}
f(x\log x)=\frac{x\log x}{\log x+\log\log x}\leq \frac{x\log x}{\log x}=x
\end{align*}
Since $f$  is strictly increasing for large $x$, we have  
\begin{align}
    f^{-1}(x)
    \geq
    x\log x.
    \label{eq:f-inverse-lower}
\end{align}

For the upper bound, define $ u(x):= x\bigl(\log x+2\log\log x\bigr)$. Then
\begin{align*}
    f(u(x))
    &=
    \frac{
        x\bigl(\log x+2\log\log x\bigr)
    }{
        \log x+
        \log\bigl(\log x+2\log\log x\bigr)
    }.
\end{align*}
For sufficiently large $x$, we have $(\log x)^2\geq \log x+2\log\log x$, so
\begin{align*}
f(u(x))\geq \frac{x(\log x+2\log\log x)}{\log x+2\log\log x}=x
\end{align*}
Since $f$ is strictly increasing, this gives
\begin{align}
    f^{-1}(x)
    \leq u(x)=
    x\bigl(\log x+2\log\log x\bigr).
    \label{eq:f-inverse-upper}
\end{align}

Due to the fact $S(m)=c2^{f(m)}$, we have $S^{-1}(y)=f^{-1}(\log y-\log c)$ for large enough $y$. Hence the rate function (for large enough $n$) is 
\begin{align*}
L_1(R,n):=S^{-1}(2^{nR})=f^{-1}(Rn-\log c)
\end{align*}
Applying \eqref{eq:f-inverse-lower} and
\eqref{eq:f-inverse-upper} gives
\begin{align}
    (Rn-\log c)\log (Rn-\log c)
    \leq
    L_1(R,n)
    \leq
    (Rn-\log c)\bigl(\log (Rn-\log c)+2\log\log (Rn-\log c)\bigr).
    \label{eq:L1-sandwich}
\end{align}

Dividing the lower bound in \eqref{eq:L1-sandwich} by
$L_2(R,n)=Rn\log n$, we have
\begin{align}
    \frac{Rn-\log c}{Rn}\cdot \frac{\log(Rn-\log c)}{\log n}= \left(1-\frac{\log c}{Rn}\right )\cdot \left(1- \frac{\log(R-\log c/n)}{\log n} \right)\rightarrow 1   \label{eq:lower-ratio-limit}
\end{align}

Similarly, dividing the upper bound in \eqref{eq:L1-sandwich} by
$Rn\log n$ yields
\begin{align}
\frac{   (Rn-\log c)\bigl(\log (Rn-\log c)+2\log\log (Rn-\log c)\bigr)}{Rn\log n}\rightarrow 1
 \label{eq:upper-ratio-limit}
\end{align}
It follows from \eqref{eq:L1-sandwich},
\eqref{eq:lower-ratio-limit}, and
\eqref{eq:upper-ratio-limit} that $    \lim_{n\rightarrow\infty}    \frac{L_1(R,n)}{L_2(R,n)}=1$. Therefore, $L_1$ and $L_2$ are equivalent rate functions.

We also need to check that both $L_1$ and $L_2$ are valid rate functions according to Definition \ref{def:valid_rate_function}. It is clear that $L_2$ is a valid rate function. To show $L_1$ is also valid, take any $R_1>R_2>0$, and invoke the upper and lower bounds in \eqref{eq:L1-sandwich}, we have
\begin{align*}
\frac{L_1(R_1, n)}{L_1(R_2,n)}\geq \frac{ (R_1n-\log c)\log (R_1n-\log c)}{(R_2n-\log c)\bigl(\log (R_2n-\log c)+2\log\log (R_2n-\log c)\bigr)}\rightarrow \frac{R_1}{R_2}>1.
\end{align*}

\subsection{Proof of Proposition \ref{prop:maximal_code}}

The proof follows that of \cite[Proposition 1]{ahlswede_identification_1989}. For any given $A_1$, we count the number of $A$ such that $|A_1\cap A|\geq \lambda M'$:
\begin{align*}
\sum_{i=\lceil \lambda M'\rceil}^{M'} \binom{|Z|-M'}{M'-i}\binom{M'}{i}
\end{align*}
For $\lambda<1/2$ and $1/\epsilon>6$, the first summand is the maximal one. Then we can upper the above expression as
\begin{align*}
M'\binom{|Z|-M'}{M'-\lceil \lambda M'\rceil}\binom{M'}{\lceil \lambda M'\rceil}\leq M' \binom{|Z|}{M'-\lceil \lambda M'\rceil} {M' \choose \lceil\lambda M'\rceil}=:T.
\end{align*}

As there are $\binom{|Z|}{M'}$ sets of cardinality of $M'$, so if $T<\binom{|Z|}{M'}$, then there must exist at least one $A_2$ such at $|A_1\cap A_2|<\lambda M'$.  Similarly, if $2T< \binom{|Z|}{M'}$, we can find another set $A_3$ with $|A_3|=M'$, $|A_3\cap A_2|<\lambda M'$ and $|A_3\cap A_1|<\lambda M'$. In general, if  it holds that $(N-1)T<\binom{|Z|}{M'}$, we can find $N$ sets with the desired property.  Hence it is easy to see that a family of sets $A_1,\ldots, A_N$ exists with the choice
\begin{align*}
N:=\left\lceil\binom{|Z|}{M'}T^{-1}\right\rceil
\end{align*}
Clearly we have
\begin{align*}
N\geq \binom{|Z|}{M'}T^{-1}
\end{align*}
As shown in \cite[Proposition 1]{ahlswede_identification_1989}, we have
\begin{align*}
\binom{|Z|}{M'}T^{-1}&=H(\lambda, M') M'^{-1}\prod_{i=1}^{\lceil \lambda M'\rceil}\frac{|Z|-M'+i}{M'-\lceil\lambda M'\rceil +i}
\end{align*}
Recall that $M':=\lceil \epsilon |Z|\rceil$. Notice that for $i= 1,\ldots, \lceil \lambda M'\rceil$, we have
\begin{align*}
\frac{|Z|-M'+i}{M'-\lceil\lambda M'\rceil +i}&\geq \frac{|Z|-M'+1}{M'-\lceil\lambda M'\rceil +\lceil \lambda M'\rceil}\\
&\geq \frac{|Z|-M'+1}{M'}\\
&\geq \frac{|Z| -\epsilon |Z|}{M'}\\
&\geq \frac{|Z|-\epsilon|Z|}{2\epsilon|Z|}\\
&=\frac{1-\epsilon}{2\epsilon}
\end{align*}
where the last inequality holds because $M'\leq \epsilon|Z|+1\leq 2\epsilon|Z|$, as it is assumed that $\epsilon |Z|\geq 1$. This implies
\begin{align*}
N\geq H(\lambda, M')M'^{-1}\left( \frac{1-\epsilon}{2\epsilon}\right)^{\lceil \lambda M'\rceil}
\end{align*}

%


\subsection{Proof of Lemma \ref{lemma:error_probability}}
\label{appendix:error_probability}

We first analyze the false negative error which can happen when $f_j(i)=1$, or equivalently 
\begin{align}
i\in f_j^{-1}[1]
\label{eq:error_analysis_1}
\end{align}
In this case, we have
\begin{align*}
Q_iW(\mc D_j^c)&=\sum_{\ve x\in\mc X^n}Q_i(\ve x)W(\mc D_j^c|\ve x)\\
&\stackrel{\eqref{eq:encoder}}{=}\sum_{\ve x\in\mc X^n}\ve 1_{\ve x\in A_i}\frac{1}{|A_i|} W(\mc D_j^c|\ve x)\\
&=\frac{1}{|A_i|}\sum_{\ve u_k\in A_i}W(\mc D_j^c|\ve u_k)\\
&\stackrel{\eqref{eq:Dj_def}, \eqref{eq:error_analysis_1}}{\leq} \frac{1}{|A_i|}\sum_{\ve u_k\in A_i}W(\mc E_k^c|\ve u_k)\\
&\stackrel{\eqref{eq:good_channel_code}}{\leq}  \frac{1}{|A_i|}\sum_{\ve u_k\in A_i}\delta= \delta
\end{align*}
In other words, the false negative error is  always upper bounded by $\delta$, irrespective of $S$.

Now consider the false positive error which can happen when $f_j(i)=0$.  In this case we have
\begin{align*}
Q_iW(\mc D_j)&=\sum_{\ve x\in\mc X^n}Q_i(\ve x)W(\mc D_j|\ve x)\\
&\stackrel{\eqref{eq:encoder}}{=}\sum_{\ve x\in\mc X^n}\ve 1_{\ve x\in A_i}\frac{1}{|A_i|} W(\mc D_j|\ve x)\\
&=\frac{1}{|A_i|}\sum_{\ve u_k\in A_i}W(\mc D_j|\ve u_k)\\
&=\frac{1}{|A_i|}\left( \sum_{\ve u_k\in A_i\bigcap B_j}W(\mc D_j|\ve u_k) +\sum_{\ve u_k\in A_i\backslash B_j}W(\mc D_j|\ve u_k)\right)\\
&\leq \frac{1}{|A_i|}\left( |A_i\cap B_j| +\sum_{\ve u_k\in A_i\backslash B_j}W(\mc D_j|\ve u_k)\right)
\end{align*}
where the last inequality holds as $W(\mc D_j|\ve u_k)\leq 1$. For $\ve u_k\notin B_j$, it holds that $\mc E_k\cap \mc D_j=\emptyset$ by the construction of $\mc D_j$ in \eqref{eq:Dj_def}. This means $\mc D_j\subseteq \mc E_k^c$ hence  for $\ve u_k\in A_i\backslash B_j$ we have
\begin{align*}
W(\mc D_j|\ve u_k)\leq W(\mc E_k^c|\ve u_k)\leq \delta.
\end{align*}
This implies
\begin{align}
Q_iW(\mc D_j)&\leq \frac{|A_i\cap B_j|}{|A_i|} + \frac{\delta |A_i\backslash B_j|}{|A_i|}\nonumber	\\
&\stackrel{(a)}{\leq} \frac{\sum_{\ell \in f_j^{-1}[1]} |A_i\cap A_\ell|}{|A_i|}+\delta\nonumber\\
&\stackrel{(b)}{<}   \frac{\sum_{\ell \in f_j^{-1}[1]} \lambda M'}{M'}+\delta\nonumber\\
&\stackrel{(c)}{\leq} S \lambda +\delta\label{eq:lambda_2}
\end{align}
where $(a)$ follows because of the definition of $B_j$ after \eqref{eq:Dj_def} and an application of the union bound,  $(b)$ follows due to the property of the sets $\{A_i\}_i$ by Proposition \ref{prop:maximal_code}, that is, $|A_i \cap A_j|<\lambda M'$ for all $ i\neq j$. and $(c)$ follows that the Hamming weight of the function $f_j$ is smaller or equal to $S$, namely $|f_j^{-1}[1]|\leq S$.

\subsection{Lemma \ref{lemma:gilbert_cw} and its proof}

The following lemma is used to prove Lemma \ref{lemma:bcf_to_id}, which will be used in the proof of the converse result in Theorem \ref{thm:converse_subset}.

\begin{lemma}[Gilbert's bound for constant weight sequences]\label{lemma:gilbert_cw}
Let  $T, S$ be two natural numbers such that $1\leq S\leq T/2$. For any $\alpha\in(0,1)$,  there exists a set $G$ of $T$-length binary sequences with the following properties
\begin{itemize}
\item  for any distinct $a,b\in G$, $|a\cap b|\leq \alpha S$
\item $\log |G|\geq \log {T\choose S}-\log {T\choose \lceil (1-\alpha)S\rceil-1 }-S$
\end{itemize}
\end{lemma}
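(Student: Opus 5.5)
The plan is a greedy (Gilbert--Varshamov type) packing argument on the set of weight-$S$ binary sequences of length $T$, which we identify with $S$-subsets of $\intset{T}$; then $|a\cap b|$ is the size of the intersection of supports. Call $b$ a \emph{conflict} of $a$ if $|a\cap b|>\alpha S$. The key step is a uniform upper bound $V$ on the number of weight-$S$ sequences that conflict with a fixed $a$. Given $V$, we build $G$ greedily: pick any weight-$S$ sequence, discard it and all its conflicts, and repeat. Each step removes at most $V+1$ sequences (the chosen one counts, since $|a\cap a|=S>\alpha S$). Hence $|G|\geq {T\choose S}/V$, and any two chosen words satisfy $|a\cap b|\leq \alpha S$. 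So it suffices to show $V\leq 2^S{T\choose \lceil(1-\alpha)S\rceil-1}$, and then take logarithms.

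To bound $V$, set $k_0:=\lfloor \alpha S\rfloor+1$. A sequence $b$ conflicts with $a$ exactly when $|a\cap b|=i$ for some $i\geq k_0$. For fixed $i$, the number of such $b$ is
\begin{align*}
{S\choose i}{T-S\choose S-i}.
\end{align*}
Note the identity $S-k_0=S-\lfloor\alpha S\rfloor-1=\lceil(1-\alpha)S\rceil-1$, and $S\geq 1$ makes this nonnegative. For $i\geq k_0$ we have $S-i\leq S-k_0\leq S\leq T/2$. Since ${T\choose j}$ is nondecreasing in $j$ for $j\leq T/2$, we get
\begin{align*}
{T-S\choose S-i}\leq {T\choose S-i}\leq {T\choose S-k_0}.
\end{align*}
Summing over $i$ and using $\sum_i{S\choose i}\leq 2^S$ gives
\begin{align*}
V\leq 2^S{T\choose \lceil(1-\alpha)S\rceil-1}.
\end{align*}

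Taking $\log$ of $|G|\geq {T\choose S}/V$ yields exactly the second property. The only delicate points are the floor/ceiling identity $S-\lfloor\alpha S\rfloor=\lceil(1-\alpha)S\rceil$ and the monotonicity of ${T\choose j}$, which is exactly where the assumption $S\leq T/2$ is used. I expect no real obstacle beyond this bookkeeping.
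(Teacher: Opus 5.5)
Your argument is correct and is essentially the paper's proof. The paper invokes the Gilbert bound for constant-weight codes from Graham--Sloane with $d=\lceil(1-\alpha)S\rceil$. Its denominator $\sum_{j=0}^{d-1}\binom{S}{j}\binom{T-S}{j}$ is exactly your conflict count $\sum_{i\ge k_0}\binom{S}{i}\binom{T-S}{S-i}$ after substituting $j=S-i$. The paper then relaxes it with the same bounds $\binom{T-S}{j}\le\binom{T}{j}\le\binom{T}{d-1}$ and $\sum_j\binom{S}{j}\le 2^S$. So you have inlined the greedy proof of the cited bound and worked with intersections directly, rather than converting from Hamming distance.

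One bookkeeping fix: your sum over $i\ge k_0$ already includes $i=S$, the word $a$ itself. So each greedy step removes at most $V$ words, not $V+1$, and that is what gives $|G|\ge\binom{T}{S}/V$ rather than $\binom{T}{S}/(V+1)$.
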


\begin{proof}
By the Gilbert bound for constant weight sequences with a weight $S$, we know from  \cite[Theorem 7]{graham_lower_1980} that for any integer $0\leq d\leq S$, there exists a set $G$ of binary sequences where $d_H(a,b)\geq 2d$ for any $a,b\in G, a\neq b$ and
\begin{align*}
|G|\geq \frac{{T\choose S}}{\sum_{i=0}^{d-1}{S \choose i}{T-S \choose i}}
\end{align*}

To see the first claim, notice that two sequences $a,b$ with weight $S$ and Hamming distance $2\delta$ has an intersection $|a\cap b| = S-\delta$. To see this, we count the total weights of the two sequences in two different ways. First note that the total weights of the two sequences is $2S$, as both sequences have weight $S$. On the other hand, the total weights can also be expressed as $2|a\cap b|+2\delta$, where $2|a\cap b|$ is the number of positions where the two sequences take the value $1$, and $2\delta$ is the number of positions where only one of the two sequences takes the value $1$.  We let $d=\lceil (1-\alpha)S\rceil$.  Then it holds that $|a\cap b|=S-d_H(a,b)/2\leq  S-d\leq \alpha S$ as $d\geq (1-\alpha)S$.

To show the lower bound, notice that we have for $i=1,\ldots, d-1$
\begin{align*}
\sum_{i=0}^{d-1}{S \choose i}{T-S \choose i}&\leq \sum_{i=0}^{d-1}{S \choose i}{T \choose i}\\
&\stackrel{(a)}{\leq}  \sum_{i=0}^{d-1}{S \choose i}{T \choose d-1}\\
&\leq 2^S {T\choose d-1}
\end{align*}
where (a) holds because for $i=1,\ldots, d-1$ with $d-1\leq (1-\alpha)S\leq S\leq T/2$, we have ${T\choose i}\leq {T\choose d-1}$. 

\end{proof}

\bibliographystyle{IEEEtran}
\bibliography{BFC_journal}

\end{document}